\documentclass[lettersize,journal]{IEEEtran}
\usepackage{amsmath,amsfonts,amssymb,amsthm}
\usepackage{algorithmic}
\usepackage{algorithm}
\usepackage{array}
\usepackage[caption=false,font=normalsize,labelfont=sf,textfont=sf]{subfig}
\usepackage{mathrsfs}
\usepackage{mathtools}
\mathtoolsset{showonlyrefs,showmanualtags}
\usepackage{textcomp}
\usepackage{stfloats}
\usepackage{url}
\usepackage{verbatim}
\usepackage{graphicx}
\usepackage{cite}
\usepackage[colorlinks=true, linkcolor=blue, citecolor=red]{hyperref}
\usepackage{mathtools}
\DeclarePairedDelimiterX{\norm}[1]{\lVert}{\rVert}{#1}
\theoremstyle{plain}
\newtheorem{theorem}{Theorem}
\newtheorem{lemma}{Lemma} 
\newtheorem{claim}{Claim}
\newtheorem{corollary}{Corollary} 
\theoremstyle{definition}

\newtheorem{assumption}{Assumption}
\theoremstyle{remark}
\newtheorem{remark}{Remark}
\begin{document}
\title{Observer-Based Robust Control for an Aerial Manipulator System under Unknown
External Disturbances}
\author{Mayank Pandey, Sneha Gajbhiye~\IEEEmembership{}
\thanks{M Pandey,  and S Gajbhiye are with the Department of Electrical Engineering, Indian Institute of Technology Palakkad, Kerala, India-678623.
\texttt{mayankpandey012005@gmail.com, snehagajbhiye@iitpkd.ac.in}}%
}
\maketitle
\begin{abstract}
This paper addresses the mathematical modeling and control of an aerial manipulator system comprising a quadrotor as the uncrewed aerial vehicle and a robotic arm as the manipulator. The dynamic model is established by identifying the overall center-of-mass velocity and the system's orientation as constraints, yielding a simplified, two-decoupled subsystems: a locked (overall translation) subsystem and a shape-space (actuation or overall rotation) subsystem, both subjected to external disturbances. 
Since the system is mechanically coupled, a critical challenge arises where constant bounded disturbances in the first subsystem manifest as time-varying, state-dependent disturbances in the second subsystem. Given that the quadrotor is inherently unstable, the movement of the robotic manipulator (RM) during flight can further jeopardize the stability of the entire QRM system if these disturbances and coupling effects are not effectively managed. To address this, we present a continuous nonlinear disturbance observer-based feedback control law, which enables the independent control of each subsystem while systematically eliminating cross-coupling effects. The efficacy of the proposed controller is validated through multiple simulations emulating practical operating conditions, thereby substantiating its real-world applicability and highlighting the core contributions of this work.
\end{abstract}
\begin{IEEEkeywords}
Manipulator, Quadrotor, continuity, end-effector, Manifold, SE(3), passivity, passive decomposition.
\end{IEEEkeywords}
\section{Introduction}\label{sec:introduction}
In the field of robotics and automation, an ideal robot is one that can perform multiple actions at a time, especially one that has a larger workspace. With this idea, drones, land-based locomotion vehicles, manipulator robots, and underwater vehicles are performing exceptionally well, but not sufficiently. For instance, land-based vehicles offer nearly infinite horizontal mobility but lack manipulation capabilities, whereas robotic arms provide high dexterity within a very confined reach. To bridge these gaps, researchers are increasingly focusing on the coupling of robotic systems, integrating mobile bases with manipulators to create hybrid platforms that combine high mobility with functional interaction. One such platform is a combination of an unmanned aerial vehicle (UAV) such as \textit{quadrotor}, and a manipulator system. The application of these aerial manipulator systems is leading to advancements in areas such as accessing post-earthquake zones, areas with chemical  or radioactive contamination \cite{shukla2016application,barchyn2017uav}, flood-affected regions to perform tasks like sample collection or delivering aid\cite{6094871,popek2018autonomous},  pick and place operations in industries \cite{9468935}, load transportation \cite{michael2011cooperative,villa2020survey,pounds2012stability,bernard2010load,bernard2011autonomous}, inspecting and potentially performing minor repairs on wind turbine blades at high altitudes, military and agriculture \cite{charron2020deleaves,alsalam2017autonomous}, performing non-destructive tests by physically contacting surfaces with sensors to check material integrity and detect defects or cracks in construction bridges or dams \cite{jimenez2017aerial,freeman2021aerial}, inspecting and fixing high-voltage electric lines without needing to shut down power or put human workers at risk \cite{app11136220,ollero2024application,ollero2025multi,machines11111024},  or interacting with articulated objects like opening doors or operating emergency switches, valve turning \cite{kim2015operating,orsag2014valve}, and placing and retrieving sensors in rugged or dense terrain for monitoring species or fire prevention \cite{10.1007/978-3-031-71360-6_28,everaerts2008use,tomic2012toward}. 
Initial methodologies in aerial manipulation sought to mitigate aerodynamic disturbances by avoiding sustained contact forces. For example, \cite{lindsey2012construction} utilized passive magnetic components to autonomously assemble truss-like 2.5-D structures without active force regulation. However, achieving true operational autonomy required transitioning from such contact-averse strategies to direct physical manipulation. Consequently, the integrated framework presented in \cite{6094871} introduced active aerial grasping and transport mechanisms, enabling quadrotors with lightweight grippers to handle objects with unknown inertial properties despite the inherent coupling forces. Recognizing that rigid grasping alters the vehicle's center of mass, subsequent research turned to flexible links. The authors in \cite{sreenath2013geometric} modeled a quadrotor with a cable-suspended load on $SE(3)\times \mathcal{S}^2$, establishing that this underactuated system is differentially flat with respect to the load position and quadrotor yaw, and proposed a geometric controller to track load positions. To overcome the swinging vulnerabilities of cable-suspended loads, research naturally transitioned from flexible strings to articulated multi-link structures. Two years later, \cite{ruggiero2015multilayer} introduced a multilayer control architecture combining mechanical counterbalancing, active disturbance compensation, and force estimation to mitigate the severe dynamic coupling induced by a moving, multi-degree-of-freedom robotic arm.\\
In parallel to geometric and classical PID approaches, several studies have explored model-based control and cascaded PID-based control strategies for aerial manipulation systems \cite{8809210, article_859848}owing to their simplicity, ease of implementation, and robustness to modeling uncertainties.
Backstepping-based nonlinear control has been widely adopted for aerial manipulators using full coupled dynamics, including coordinate-based and coordinate-free formulations for underactuated systems \cite{kobilarov2014nonlinear}. Experimental validation of such approaches has also been demonstrated on outdoor multirotor platforms carrying high-DOF manipulators, with improved performance compared to PID-based control \cite{heredia2014control}.
A hybrid force/motion control framework for quadrotors equipped with rigid or lightweight tools is proposed in \cite{6696849}, where the dynamics are transformed to the tool-tip space and passively decomposed into tangential and normal components for contact control, with internal stability conditions and bounded angular-rate behavior analytically established. In the presence of model uncertainties, \cite{7139851} introduced a CAD-driven design with an adaptive backstepping controller for robust trajectory tracking. 
Early aerial transportation research typically assumed simplified slung-load or rigid-body payload dynamics. However, modern aerial manipulation demands active, highly articulated multi-degree-of-freedom tracking. Modeling these complex systems requires transitioning from basic particle dynamics to an n-DOF Euler-Lagrangian framework that captures all coupled state-space interactions. A foundational model of this type was developed by Lippiello and Ruggiero \cite{LIPPIELLO2012704}, who formulated a generalized n-DOF coupled dynamic model stabilized via Cartesian impedance control. Despite its foundational rigor, this impedance framework struggles with tracking accuracy under sustained external disturbances, leading to notable steady-state offsets. Furthermore, the explicit formulation of fully coupled, multi-body kinematics yields a massive computational burden, which frequently causes control command delays and numerical instability during numerical simulation and real-time execution. \\
To circumvent the mathematical complexity of such highly coupled formulations, recent paradigms favor reduction of system dimensionality or structural decoupling, in which the UAV and manipulator dynamics are treated independently during control synthesis or by using the differential flatness property of the system. For instant, the author in \cite{author2021title}, reduced the complex quadrotor-arm dynamics to a lower-dimensional structured system to exploit its differential flatness and control it via optimization-based integrators. These optimization-driven trajectories entail heavy onboard computational costs and remain acutely sensitive to unmodeled external forces. Leveraging the intrinsic structure of multirotor dynamics, the authors in \cite{7759712} decompose the system into a translational subsystem governing the CoM motion and a rotational subsystem governing attitude by using differential flatness. To bridge the gap between structural decoupling and robust disturbance rejection without compounding computational overhead, energy-based paradigms have emerged as a powerful alternative, as it guaranties stable energy exchange with the environment and inherently promotes compliant and safe contact behavior, even in the presence of model uncertainties and external disturbances. The work in \cite{6213082,5718145,5638155} utilizes the passive decomposition method to decouple the system's kinetic energy into two subsystems locked and shape by projecting the dynamics onto the tangent and normal spaces of the configuration manifold. Similarly, \cite{6907674} applies this passive decomposition to an aerial manipulator to reveal a decoupled structure and ensures stability through a backstepping-like end-effector tracking control law, though it assumes a disturbance-free environment.  To address this limitation, \cite{https://doi.org/10.1002/rnc.5041}  demonstrates finite-time stability for decoupled dynamics in the presence of unknown external disturbances using sliding mode control. However, a significant drawback of this technique remains its discontinuous nature, which introduces control chattering and results in unwanted jerk within the system.  Building on these advancements, \cite{7875412} proposed a 2-DOF arm configuration capable of achieving full 6-DOF end-effector motion, supported by a disturbance observer-based robust control framework. Compliant physical interaction in aerial manipulators is addressed in \cite{liu2024coordinated} through a coordinated quadrotor–manipulator framework that combines force estimation without force sensors and an adaptive-weight MPC planner, enabling autonomous stiffness adjustment while guaranteeing passivity via a Lyapunov-like energy function and validated through human–robot interaction experiments. 
To ensure analytical stability under aggressive aerodynamic and dynamic disturbances, Chen \cite{1372532, CHEN2001329} proposed disturbance observer-based control strategies for nonlinear systems. By extending this framework to handle harmonic variations, these strategies demonstrate significantly enhanced robustness against model uncertainties and external perturbations.
Quadrotor with Robotic Manipulator (QRM) systems are mechanically coupled, differentially flat systems in which an $m$-link manipulator is ideally mounted at the center of mass (\textit{CoM}) of the aerial platform. Such systems possess $m+6$ degrees of freedom (\textit{DoF}), enabling independent regulation of the platform’s translational and rotational motions, together with the manipulator configuration. This increased actuation capability allows QRM systems to perform versatile manipulation and interaction tasks across multiple environments, including ground-based operations, surface interactions on water, and fully aerial missions. This paper contributes the following points to the aerial manipulator for transportation missions:
\begin{enumerate}
    \item
    The first objective of this paper is to establish a mathematical model of the QRM system. To address this, we employ a passive decomposition strategy inspired by \cite{6907674}, in which the system's kinetic energy is decoupled into translational (\textit{CoM} position) and rotational (shape) components. This dynamic decoupling ensures that the system \textit{CoM} acceleration is entirely independent of the RM state variables. While the translational control input remains structurally dependent on the quadrotor's attitude (roll, pitch, yaw) for thrust vectoring, the translation dynamics are freed from the complex, nonlinear inertial coupling forces generated by the RM's motion. This yields two distinct subsystems: a translational `locked' system and an internal `shape' system, where internal motion is rendered gravity-free and structurally non-disruptive to the \textit{CoM} position tracking and its stability.
    \item The second contribution is the system subjected to external disturbances, such as wind or environmental forces acting on the quadrotor's translational position. Integrating these bounded disturbances into the reduced/decomposition model, we present a rigorous analysis of how they propagate through the system's dynamic coupling into the internal configuration variables (roll, pitch, yaw, joint angles). This formulation also accurately captures the indirect effects of disturbances on end-effector precision, which is critical for high-stakes applications, including aerial firefighting, precision spraying, and defense-related maneuvers.
    \item Building upon this disturbance analysis, we propose a continuous nonlinear disturbance observer-based feedback control (NDOBC) law to compensate for these propagated perturbations, separately for the translational and internal configuration dynamics of the perturbed system, ensuring stable motion of translation, quadrotor attitude, and the RM joint angle.
    \item Finally, we present the simulations validation of the proposed controller through two cases: (a) the stabilization/tracking of the quadrotor and arm reference position, and (b) aerial demonstration of a pick-and-place scenario. In case (b), trajectory generation of quintic polynomials facilitates seamless pick-and-place maneuvers and robust end-effector trajectory tracking. 
\end{enumerate}
To present our findings, the first section \ref{System modelling} introduces mathematical modeling for the coupled system (standard dynamics) and the decoupled dynamics. This serves as the basis for Section \ref{Control design}, where the control design and stability proofs are derived. These theoretical results are then validated in Section \ref{sec: simulation} via point stabilization and pick-and-place mission simulations. The paper concludes with Appendix \ref{append} detailing the mathematical preliminaries supporting the control and modeling sections. 
 \section{System Modelling} \label{System modelling}
\subsection{Configuration Space and Coordinates}
Consider a QRM System consisting of a quadrotor of mass $m_0$ and a one-link RM of length $ \ell $ and mass $m_1$ with a gripper. The RM is mounted directly beneath the \textit{CoM} of the quadrotor system. Assuming a rigid gripper, the RM provides two DoF via two revolute joints: one rotates about the $z$-axis and the second about the $y$-axis, as shown in Fig. (\ref{fig:QRMS}). To describe this multi-body system, consider three primary reference frames: $\mathcal{F}_I$, $\mathcal{F}^b_Q$, and $\mathcal{F}^b_M$, which denote the inertial reference frame, the quadrotor body-fixed frame, and the RM body-fixed frame, respectively. \\
The quadrotor pose is defined on the configuration space ${SE}(3) = \mathbb{R}^3 \rtimes SO(3)$, known as the special Euclidean group, where $\mathbb{R}^{3}$ represents the translational positions and $SO(3)$ denotes the special orthogonal group of rotations. The pose is represented by the pair $(\eta_0,R_0) \in SE(3)$. The translational component $\eta_0 = [x\;\;y\;\;z]^\top$ corresponds to the position of the quadrotor as $CoM$ expressed in the inertial frame. To describe the vehicle's rotational component, we adopt the $ZYX$ Euler angle convention from \cite{murray2017mathematical}. Under this convention, the attitude $R_0$ is parameterized by Euler angles as $\Phi = [\phi_r\;\;\phi_p\;\;\phi_y]^\top$, where $\phi_r, \phi_p$ and $\phi_y$ denote the roll, pitch, and yaw angles, respectively. Finally, $\Theta = [\theta_1\;\;\theta_2]^\top$ corresponds to the manipulator joint angles in the configuration space $\mathcal{S}^2 \subseteq \mathcal{S}^1 \times \mathcal{S}^1 $. Hence, the overall configuration space of the multi-body QRM system is given on the manifold $\mathcal{M} = \mathbb{R}^3 \rtimes {SO}(3)\times \mathcal{S}^2$. Consequently, the generalized coordinate vector $q$ describing the full configuration of the QRM system is defined as:
\begin{equation}
  q := [\eta_0^\top \;\; \Phi^\top \;\; \Theta^\top]^\top \in \mathcal{M},  
\end{equation}
with the corresponding velocity vector
\begin{equation}
    \dot q := [\dot \eta_0^\top\;\; \dot \Phi^\top\;\; \dot \Theta^\top]^\top \in T_q\mathcal{M},
\end{equation}
Here, $T_q\mathcal{M}$, is the velocity(tangent) space of the manifold $\mathcal{M}$ at $q$, and $\dot \eta_0 = [\dot x \;\; \dot y \;\; \dot z]^\top\in \mathbb{R}^3$ is the quadrotor platform's $CoM$ velocity in the inertial frame ${\mathcal{F}_I}$; $\dot\Phi = [\dot\phi_r \;\;\dot\phi_p \;\; \dot\phi_y]^\top \in \mathbb{R}^3$ is the Euler rates of the quadrotor, and $\dot {\Theta} = [\dot \theta_1 \;\; \dot \theta_2]^\top \in \mathbb{R}^2$ is the joint angle velocity of the RM. The complete schematic diagram is depicted in Fig. \ref{fig:QRMS}.
\begin{figure}
    \centering
    \includegraphics[width=1.1\linewidth]{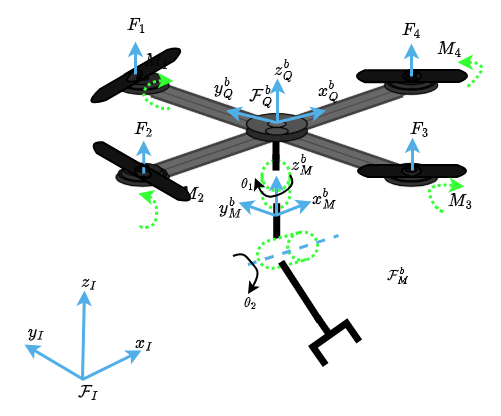}
    \caption{ Schematic of Quadrotor Robotic Manipulator (QRM) system and two Dynamixel servos used for manipulator articulation.}
    \label{fig:QRMS}
\end{figure}
\subsection{Lagrangian dynamics of QRM system}
Let the $CoM$ position of the robotic arm in \{$\mathcal{F}_I$\}, denoted by $\eta_{1}$, be expressed as the sum of the quadrotor's $CoM$ $\eta_{0}$ in \{$\mathcal{F}_I$\} and the RM joint $CoM$ position in \{$\mathcal{F}_I$\}.
\begin{equation} \label{mani_trans}
    \eta_{1} = \eta_0 + R_0 \eta_{1}^b     
\end{equation}
where the vector $\eta_{1}^b \in \mathbb{S}^{2}$  is defined by the link's geometry and the joint angles $\theta_1$ and $\theta_2$, and is formulated as:
\begin{equation}  \label{body_mani}
   {\eta}_{1}^b = R_{1}^b[0\;\;0\;\;-l/2]^\top
\end{equation}
where $R_{1}^b = R_z(\theta_1)R_y(\theta_2)$ is the rotation of the RM in ${\mathcal{F}^b_Q}$.
Define $\eta_{qm} \in \mathbb{R}^3$ as the $CoM$ of the combined QRM system in the $\mathcal{F}_I$. Following the principle of a composite system of two rigid bodies, it is formulated as:
\begin{equation} \label{QRM pose}
    \eta_{qm} =  \eta_{0} + R_0(\Phi) \eta^b_1 \left(\frac{m_1}{m_0 + m_1}\right)
\end{equation}
The linear and angular velocities of the RM are mapped from joint space to task space through the Jacobian matrix, which serves as a linear transformation relating joint rates to the differential translation and rotation of the $CoM$ of the RM.
These velocities are mapped from the joint velocities $\dot \Theta$ through the manipulator Jacobians $J_{v_1}$ and $J_{\omega_1}$, such that
\begin{subequations}\label{equ:link_velocities}
\begin{align}
    \dot{\eta}_{1}^b &= J_{v_1}(\Theta) \dot{\Theta} \label{equ:vel_linear}, \\
    \omega_{1}^b &= J_{\omega_1}(\Theta) \dot{\Theta} \label{equ:vel_angular},
\end{align}
\end{subequations}
where 
\begin{equation*}
\label{eq:trans_jac}
J_{v_1}(\Theta)  =
\begin{bmatrix}
\frac{l}{2}\sin \theta_1 \sin \theta_2 & -\frac{l}{2}\cos \theta_1 \cos \theta_2\\
-\frac{l}{2}\cos \theta_1 \sin \theta_2 & -\frac{l}{2}\sin \theta_1 \cos \theta_2\\
0 &\frac{l}{2}\sin \theta_2
\end{bmatrix} ,
\end{equation*}
\begin{equation*}
 \label{eq:rot_jac}
J_{\omega_1}(\Theta)  =
\begin{bmatrix}
0 & -\sin \theta_1 \\
0 & \cos \theta_1 \\
1 & 0
\end{bmatrix}.
\end{equation*}
Again, by differentiating equation \eqref{mani_trans} and using equation \eqref{equ:vel_linear}, we obtain the $CoM$ linear velocity of the RM,
\begin{equation}\label{eq:lin_ine_vel}
   \begin{aligned}
       \dot\eta_{1} &= \dot\eta_0 + \dot R_0 \eta_{1}^b + R_0 J_{v_1} \dot \Theta.
   \end{aligned}
\end{equation}
Let $\omega_0$ be the angular velocities of the quadrotor in the ${\mathcal{F}_I}$.
Therefore, the standard Kinematic identity for the rigid body is:
\begin{equation} \label{eq:kine_iden}
        \dot{R}_0 = \widehat{{\omega}}_0 R_0,
\end{equation}
$\omega_0$ can be further expressed as an angular rate; this yields:
\begin{equation} \label{equ:angu_velocity}
    \omega_0 = T(\Phi) \dot\Phi\ ,
\end{equation}
where
\[T(\Phi)  = \begin{bmatrix}
    1 & 0 & -\sin(\phi_p) \\
    0 & \cos(\phi_r) & \sin(\phi_r)\cos(\phi_p) \\
    0 & -\sin(\phi_r) & \cos(\phi_r) \cos(\phi_p)
    \end{bmatrix},\]
and $\widehat\omega_0$ is the skew symmetric matrix of the vector $\omega_0$.
From equation \eqref{eq:kine_iden}, we obtain,
\[\dot{R}_0 \, {\eta}^{b}_{1} = \hat{{\omega}}_0 R_0 {\eta}^{b}_{1}.\]
Since $\hat{a}{b} = {a} \times {b}$ and ${a} \times{b} = - {b} \times {a}$,
\begin{equation} \label{kinematic relation}
      \hat{{\omega}}_0R_0 {\eta}^{b}_{1} 
    = {\omega}_0 \times (R_0 {\eta}^{b}_{1})
    = - \widehat{{R_0} {\eta}^{b}_{1}} {\omega}_0.
\end{equation}
Here $\widehat{{R_0} {\eta}^{b}_{1}}$ refers to the position of RM in ${\mathcal{F}_I}$.
Substituting the kinematic relation \eqref{kinematic relation} into \eqref{eq:lin_ine_vel} yields the inertial linear velocity of the RM's $CoM$:
\begin{equation}  \label{manipulator translation}
    \dot{{\eta}}_{1} = \dot{{\eta}}_0 
    - \widehat{R_0{\eta}^{b}_{1}}{\omega}_0 
    + R_0 J_{v_1} \dot{\Theta}.
\end{equation}
Moreover, the angular velocity of the RM in the \{$\mathcal{F_I}$\} can be calculated from equation \eqref{equ:vel_angular} and equation \eqref{equ:angu_velocity}, which gives
\begin{equation} \label{eq:ang_ine_vel}
    \omega_1 = \omega_0 + R_0 J_{\omega_1}\dot\Theta.
\end{equation}
To formulate the system dynamics via the Euler-Lagrange framework, we define the kinetic energy of the QRM system, which is equal to the sum of the quadrotor kinetic energy and the kinetic energy of RM in the \{$\mathcal{F_I}$\}.
The kinetic energy of the quadrotor is calculated from the quadrotor's linear velocity $\dot \eta_0$ and angular rates $\dot \Phi$,
\begin{equation}
\begin{aligned}
KE_{quad} &= \frac{1}{2} \left \langle \dot{\eta_0}, m_0 \dot{\eta_0} \right \rangle + \frac{1}{2}\left \langle \mathbb{I}_0 \omega^b_0, \omega^b_0  \right \rangle, \\ 
      &= \frac{1}{2} \dot{\eta_0}^\top m_0 \dot{\eta_0} + \frac{1}{2} \dot{\Phi}^\top T^\top R_0 \mathbb{I}_0 R_0^\top T \dot{\Phi}.
\end{aligned}
\end{equation}
While the kinetic energy of the RM can be obtained from the linear and angular velocities of the RM, as given by equations \eqref{manipulator translation}-\eqref{eq:ang_ine_vel},
\begin{equation}
    KE_{Mani} = \frac{1}{2} m_1 \dot{\eta_1}^\top {\dot{\eta_1}} 
+\frac{1}{2} \omega_1^\top R_1^{\top} \mathbb{I}_1 R_1 \omega_1.
\end{equation}
Hence, the total kinetic energy of the QRM system is 
\begin{equation} \label{K.E.}
\begin{aligned}
     K.E. =& \frac{1}{2} \dot{\eta_0}^\top m_0 \dot{\eta_0} + \frac{1}{2} \dot{\Phi}^\top T^{\top} R_0 \mathbb{I}_0 R_0^\top T \dot{\Phi}
+ \frac{1}{2} m_1 \dot{\eta_1}^\top {\dot{\eta_1}} \\
&+\frac{1}{2} \omega_1^\top R_1^{\top} \mathbb{I}_1 R_1 \omega_1.
\end{aligned}
\end{equation}
In the above equation, $R_0$ = $R_z({\phi_{y}}) R_y({\phi_{p}}) R_x({\phi_{r}})$ and  
$R_1 = R_z({\phi_{y}}) R_y({\phi_{p}}) R_x({\phi_{r}})R_z({\theta_{1}}) R_y({\theta_{2}})$ represent the quadrotor and RM rotation in the inertial frame, respectively. The matrices $\mathbb{I}_0$ and $\mathbb{I}_1$ are the moment of inertia tensors of the quadrotor and RM system in their respective frames. 
Consequently, the expression for kinetic energy in equation \eqref{K.E.} can be reformulated as: 
\begin{equation} \label{final KE}
    K.E. = \frac{1}{2}\dot{q}^\top M(q)\dot{q},
\end{equation}
where $M$ is called the total mass matrix and is given as:
\begin{equation} \label{eq:mass_matrix}
     M(q) = \begin{bmatrix}
     M_{11} & M_{12} & M_{13} \\
     M_{12}^\top & M_{22} & M_{23} \\
     M_{13}^{\top} & M_{23}^\top & M_{33}
 \end{bmatrix},
\end{equation}
 where, 
 \begin{equation*}
     \begin{aligned}
         M_{11} &=  m_{qm}I_3,\\
      M_{12} &= - m_1 \widehat{R_0\eta_{1}^{b}}T,\\
      M_{13} &= m_1R_0J_{v_1},\\
      M_{22} &=T^{\top}R_0\mathbb{I}_0R_0^{\top}T + m_1 T^\top \widehat{(R_0 \eta_{1}^b)}^\top \widehat{(R_0 \eta_{1}^b)}T+ \\& \quad \quad \hspace{3cm}T^\top R_0 R_{1}\mathbb{I}_1R_{1}^{\top}R_0^{\top}T,  \\
M_{23} &= -m_1T^\top \widehat{(R_0 \eta_{1}^b)}^\top , R_0J_{v_1} + T^{\top}R_0 R_{1}\mathbb{I}_1 R_{1}^\top J_{\omega_1}, \\
M_{31} &= M_{13}^{\top}, \\
M_{32} &= M_{23}^{\top},\\
M_{33} &= m_1 J_{v_1}^{\top}J_{v_1}+J_{\omega_1}^{\top}R_{1}\mathbb{I}_1R_{1}^{\top}J_{\omega_1},
     \end{aligned}
 \end{equation*}
here, $I_3 \in \mathbb{R}^{3\times 3}$ denotes the identity matrix and $m_{qm}$ represents the total mass of the system, defined as $m_0 + m_1$.
The QRM system is under the effect of gravity along the \{$\mathcal{F_I}$\}’s negative $Z$-direction.  
Thus, the gravitational potential energy is,
\begin{equation} \label{potential energy}
U(q) := - \sum_{i=0}^1 m_i g \langle \eta_i, \mathbf{e}_3 \rangle,
\end{equation}
where g is the acceleration due to gravity, which is taken as 9.8 m/s$^2$, and $\mathbf{e}_3$ is a $3\times1$ unit vector in the $z$-direction.\\
It suffices to consider the time derivative of a scalar function or vector function, which is denoted by $\dot{()}$ for the first derivative and $\ddot{()}$ for the second derivative. Higher-order derivatives are represented by $()^{(i)}$, denoting the $i^{th}$ derivative with respect to time.
Using the kinetic energy given by equation \eqref{final KE} and the gravitational potential energy given by equation \eqref{potential energy}, the Euler-Lagrange dynamics of the QRM system is
\begin{equation} \label{dynamics}
 M(q)\ddot{q} + C(q,\dot q)\dot{q} + G(q) = \tau + \tau_d,   
\end{equation}
where:
\begin{itemize}
    \item $C(q,\dot{q}) \in \mathbb{R}^{8\times 8}$ is the Coriolis matrix whose elements are defined via the Christoffel symbols of the first kind as $C_{kj} = \sum_{i=1}^8 \frac{1}{2} \left( \frac{\partial m_{kj}}{\partial q_i} + \frac{\partial m_{ki}}{\partial q_j} - \frac{\partial m_{ij}}{\partial q_k} \right) $, ensuring that the matrix $\dot{M}(q) - 2C(q,\dot{q})$ is skew-symmetric \cite{murray2017mathematical}.
    \item $G(q) = \partial U(q) / \partial q$ is the gravitational force, 
    given by
\begin{equation} \label{G(q)}
    \begin{aligned}
G(q) = g\begin{bmatrix}
 m_{qm}\mathbf{e}_3^\top &
m_1 \frac{\partial\langle\eta_1,\mathbf{e}_3\rangle}{\partial \zeta}
\end{bmatrix}^\top,
\end{aligned}
\end{equation}
where $q[i]$ is meant for the $i^{th}$ element of vector $q$.
\item The control input vector $\tau \in \mathbb{R}^8$ is mapped from a set of independent actuators defined by $u =
        \left(u_{t}R_0\mathbf{e}_3, u_{\Phi}, u_{\Theta}
   \right)$ $\in \mathbb{R}^8$. In this formulation,  $u_{t} \in \mathbb{R}$ represents the total thrust of the quadrotor, and $u_{t}R_0\mathbf{e}_3 \in \mathbb{R}^3$ represents the thrust vector in the $\mathcal{F}_I$. $u_{\Phi} \in \mathbb{R}^3$ denotes quadrotor attitude control, and $u_{\Theta}\in \mathbb{R}^2$ corresponds to the joint torques of the robotic manipulator (RM). 
    \item $\tau_d = \begin{bmatrix}
        [{\tau_{dL}}]_{3 \times 1} \\ [\tau_{dS}]_{5 \times 1}
    \end{bmatrix}\in \mathbb{R}^8$ is the external disturbance.
\end{itemize}
\begin{assumption}\label{assump_1}
    External disturbances is constant, that is, $\dot{\tau}_{dL} = 0$ with $\| \tau_{dL}\| \leq \sigma$.
\end{assumption}
\subsection{Passivity-Based Decomposition of the Configuration Manifold}
A mechanical system defined by generalized coordinates $q$ in the configuration manifold $\mathcal{M}$ and velocities $\dot q$ in the velocity space of $ \mathcal{M}$ is classified as passive if its total stored energy grows no faster than the external power supply. For the multi-body dynamics in equation \eqref{dynamics}, this energy is characterized by the inertia matrix $M(q)$, which induces a Riemannian metric on the manifold $\mathcal{M}$. By using the Riemannian metric, we can form  $M(q)$-orthogonal subspaces within the velocity space to effectively decouple the system’s kinetic energy mathematically. This decomposition ensures that the RM motion does not act as an internal disturbance to the quadrotor's translational path, thereby isolating the energy dynamics of each subsystem.
To facilitate the analysis of the decoupled system's dynamics, the configuration vector $q$ is partitioned into locked and shape variables as follows: \begin{equation} q = [\eta_0, \zeta]^\top \in \mathcal{M}, \end{equation} where $\eta_0 \in \mathbb{R}^3$ represents the previously defined $CoM$ position of the quadrotor. The vector of internal configuration variables or shape variables is introduced as \begin{equation} \zeta = [\phi_r\;\; \phi_p\;\; \phi_y\;\; \theta_1\;\; \theta_2]^\top, \end{equation} which collects the quadrotor's attitude and the joint angles of the RM. Consequently, the mass matrix defined in equation \eqref{eq:mass_matrix} can be expressed in the following block-partitioned form: 
\begin{equation} \label{eq: block mass matrix}
    M(q) = \begin{bmatrix}
        \tilde{M}_{11} &  \tilde{M}_{12}\\
        {\tilde{M}_{12}}^\top & \tilde{M}_{22}
    \end{bmatrix},
\end{equation}
where, 
\begin{align*}
    \tilde{M}_{11} &= m_{qm}I_3,\\
    \tilde{M}_{12} &= \begin{bmatrix}
        M_{12} &  M_{13}
    \end{bmatrix},\\
    \tilde{M}_{22} &= \begin{bmatrix}
        M_{22} &  M_{23}\\
        M_{23}^\top &  M_{33}
    \end{bmatrix}.
\end{align*}
We define a smooth mapping
\begin{equation}
f : \mathcal M \to SO(3)\times \mathcal{S}^2, \qquad f(q) = \zeta,
\end{equation}
 which isolates the attitude and joint variables from the translational coordinates of the QRM system. This can be expressed using a selection matrix $H \in \mathbb{R}^{5 \times 8}$, such that,
\[\zeta = Hq, \] 
where H is defined as the selection matrix $H = \begin{bmatrix}
    0_{5\times 3} & I_{5 \times 5}
\end{bmatrix}$. Given that $f$ is a linear mapping of q, its Jacobian is a constant matrix $Df = \frac{\partial f}{\partial q} = H$. By the Rank-Nullity Theorem, given that $dim(ker(H))=3$ and the configuration space $\mathcal{M}$ has dimension 8, this yields that the dimension of the image $im(H)$ is exactly 5. Because the image spans the entire co-domain $\mathbb R^{5}$, the linear mapping $f$ is a smooth submersion.

Proceed by defining the level set of $f$ such that it is the set of all configurations $q$ that satisfy the state constraint:
\begin{equation}
\mathcal M_f = \{ q \in \mathcal M \mid f(q) = c \}, \, \text{for some $c \in \mathbb{R}^5$}.
\end{equation} 
Geometrically, this level set represents a 3-dimensional smooth submanifold within an 8-dimensional configuration manifold $\mathcal{M}$.
Since $f$ is a submersion, $M_f$ is an embedded submanifold of $\mathcal{M}$. It's velocity subspace at $q$, denoted by $T_q \mathcal M_f$, is given by the kernel of  differential of $f$ at $q$:
\begin{equation}
T_q \mathcal M_f = \ker Df(q).
\end{equation}
We call it the tangential (locked) distribution, and it is expressed as,
\begin{equation}
\mathcal{D}_q := T_q \mathcal M_f
= \{ \dot q \in T_q\mathcal M \mid D_qf(q)\dot q = 0 \}.
\end{equation}
Vectors in $\mathcal{D}_q$ correspond to instantaneous motions that preserve the internal shape variables, i.e., $\dot \zeta = 0$. These directions correspond to pure translational motion of the QRM system, with the internal configuration locked. Hence, from the \eqref{QRM pose}, it is evident that $\dot \eta_0 = \dot \eta_{qm}$, and this renders the manifold $\mathcal{D}_q$ to be re-written as,
\begin{equation} \label{tangential distribution}
    \mathcal{D}_q = \left\{ \begin{bmatrix} I_3 & \mathbf{0}_{3 \times 5} \end{bmatrix} \begin{bmatrix} \dot{\eta}_{qm} \\ \dot{\zeta} \end{bmatrix} \in \mathbb{R}^3 \;\middle|\; \dot{\eta}_{qm} \in \mathbb{R}^3, \dot{\zeta} = \mathbf{0}_{5\times1} \right\}
\end{equation},
where $\dot{\eta}_{qm} = [\dot{x}_{qm}, \dot{y}_{qm}, \dot{z}_{qm}]^\top$ denotes the velocity of the $CoM$ of the QRM system in  $\mathcal{F}_{I}$.
Let ${u, v}$ $\in$ $T_q\mathcal M$. Then the kinetic energy induces a Riemannian metric on $T_q\mathcal M$:
\begin{equation} \label{M-orthogonality}
\langle \! \langle u, v \rangle \!\rangle_q := u^\top M(q) v.
\end{equation}
Using this metric, we define the \emph{normal (shape) distribution} as the orthogonal complement of $\mathcal{D}_q$:
\begin{equation} \label{normal distribution}
\mathcal{D}_q^\perp
:= \{ v \in T_q\mathcal M \mid \langle \! \langle v, u \rangle \!\rangle_q = {0}, \ \forall u \in \mathcal{D}_q \}.
\end{equation}
This definition ensures that motions in $\mathcal{D}^\perp_q$ are orthogonal to locked translational motions $\mathcal{D}_q$.
To calculate set $\mathcal{D}^\perp_q$, let any $v = \begin{bmatrix} \dot \eta_0 \\ \dot \zeta \end{bmatrix} \in T_q\mathcal{M}$ and $u \in \mathcal{D}_q$;
thus, the orthogonality condition given by equation \eqref{M-orthogonality} yields,
\begin{equation*}
    0 = v^\top M(q)\,u, \;\; \forall u=\begin{bmatrix}\dot \eta_{qm}\\0\end{bmatrix}\in\mathcal{D}.
\end{equation*}
This gives,
\begin{equation} \label{eq:momentum}
    m_{qm} \dot \eta_0 + \tilde M_{12} \dot\zeta = 0
\quad\Longrightarrow\quad
\dot\eta_0 = -\frac{1}{m_{qm}} \tilde M_{12}\, \dot\zeta.
\end{equation}
Defining
\begin{equation} \label{N and M12}
 N(\zeta) := -\tfrac{1}{m_{qm}} \tilde M_{12}(\zeta).
 \end{equation}
and substituting the vector $v$, we get
\begin{figure}
    \centering
\includegraphics[width=1\linewidth]{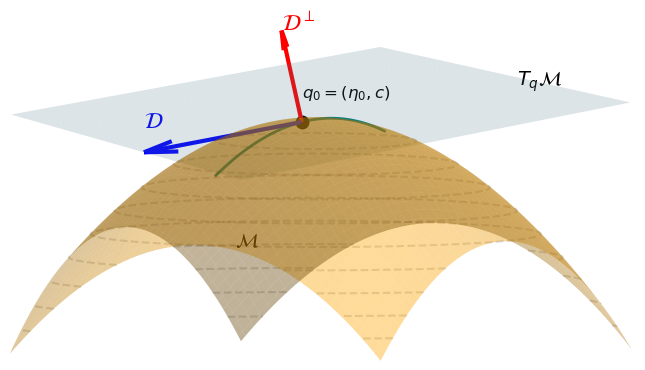}
    \caption{Geometric interpretation of locked and shape systems in an aerial manipulator}
    \label{Geometry_locked_shape}
\end{figure}
\begin{equation}
    v = \begin{bmatrix} N(\zeta) \\ {I}_5 \end{bmatrix} \dot \zeta \in \mathcal{D}^\perp.
\end{equation}
Finally, It is evident that, 
\begin{align}
\mathcal{D}_q \cap \mathcal{D}_q^\perp &= \{0\}, \\
\dim \mathcal{D}_q + \dim \mathcal{D}_q^\perp&= \dim T_q\mathcal M.
\end{align}
This implies that the tangent space admits the orthogonal direct sum decomposition
\begin{equation} \label{decomposition}
T_q\mathcal M = \mathcal{D}_q \oplus \mathcal{D}_q^\perp.
\end{equation}
This decomposition is referred to as energy based decomposition, as it is directly induced by the kinetic energy of the system. The geometric relationship of this decomposition between the shape space and the locked manifold as a tangent and normal direction in the associated tangent space is illustrated in Fig. \ref{Geometry_locked_shape}.
Therefore, the total velocity vector is partitioned as:
\begin{equation} \label{Decomposition}
        \dot q = \begin{bmatrix}
    I_3 & N(\zeta) \\
    0 &   I_5
\end{bmatrix} \begin{bmatrix}
    \dot \eta_{qm} \\
    \dot \zeta
\end{bmatrix} =  \mathcal{B}(\zeta) \nu 
\end{equation}.
Along the tangent space $\mathcal{D}_q$, the translational velocity of the quadrotor coincides with the velocity of the $CoM$ of the QRM system, which results in the multi-body system acting as a single rigid body. Consequently, the total gravitational potential energy of the system becomes uniquely parameterized by the $CoM$ position. Under this projection, gravitational forces map exclusively into the translation acceleration dynamics of the QRM $CoM$. These internal motions are orthogonal to the locked translational motion; therefore, gravity does not appear explicitly in their dynamics. This decomposition yields a structure-preserving separation between $CoM$ of the QRM system and internal shape dynamics.
\subsection{Decoupled dynamics}
From the decomposition  \eqref{Decomposition}, the Euler-Lagrange equation \eqref{dynamics} can be derived as follows: 
\begin{equation*}
    M(q)\mathcal{B}(\zeta)\dot{\nu} + \Bigl(M(q){\dot {\mathcal{B}}(\zeta)} + C(\zeta,\dot{\zeta})\mathcal{B}(\zeta)\Bigr)\nu + G = \tau
\end{equation*}
Multiply by ${\mathcal{B}(\zeta)}^{\top}$ ,
\begin{equation*}
\begin{aligned}
    {\mathcal{B}(\zeta)}^{\top}M(q)\mathcal{B}(\zeta)\dot{\nu} + {\mathcal{B}(\zeta)}^{\top}\Bigl(M(q)\dot{\mathcal{B}}(\zeta) + C(\zeta,\dot{\zeta})\mathcal{B}(\zeta)\Bigr)\nu +\\{\mathcal{B}(\zeta)}^{\top}G = {\mathcal{B}(\zeta)}^{\top}\tau.
\end{aligned}
\end{equation*}
By simple matrix multiplication and using the \eqref{N and M12}, the above equation can be described as ,
\begin{equation} \label{new dynamics}
    \bar{M}(\zeta) \dot{\nu} + \bar{C}\nu + \bar{G} = \bar{\tau} + \bar{\tau}_d,
\end{equation} 
    where 
    \begin{equation} 
        \bar{M}(\zeta) = \begin{bmatrix}
    m_{qm} I_3 & \mathbf{0}_{3\times5}\\
    \mathbf{0}^\top_{5\times3} & N^\top \tilde M_{12}+\tilde M_{22} 
\end{bmatrix}.
\end{equation}
    Define $N^\top \tilde M_{12}+ \tilde M_{22} = M_S$; then $ \bar M(\zeta)$ can be re-written as:
    \begin{equation} \label{M_bar}
        \bar{M}(\zeta) = \begin{bmatrix}
    m_{qm} I_3 & \mathbf{0}\\
    \mathbf{0}^\top & M_S 
\end{bmatrix},
\end{equation}
\begin{equation} \label{G_bar}
   \bar G = \begin{bmatrix}
        m_{qm}g\mathbf{e}_3 \\ \mathbf{0}_{5\times1}
    \end{bmatrix},
\end{equation} and
\begin{equation}
    \bar \tau = \begin{bmatrix}
        u_tR_0\mathbf{e}_3 \\ N(\zeta)^\top u_tR_0\mathbf{e}_3 + \begin{bmatrix}
            u_\Phi \\ u_\Theta
        \end{bmatrix}
    \end{bmatrix}, \quad \bar \tau_d = \begin{bmatrix}
        \tau_{dL} \\ N(\zeta)^\top \tau_{dL} + \tau_{dS}
    \end{bmatrix}.
\end{equation}
We consider an external disturbance ($\tau_{dL}$) based on Assumption -\ref{assump_1} acting on the quadrotor’s translational position, while assuming no direct external disturbances act on the internal arm variables $\zeta$ (i.e., $\tau_{dS} =0$). However, due to the coupled dynamics of the Aerial Manipulator (UAV-Arm system), this constant task-space disturbance propagates into the internal configuration as a time-varying disturbance term, $N^\top\tau_{dL}$. This accurately reflects real-world scenarios where external wind or environmental forces acting on the quadrotor body indirectly perturb the arm's joint stability through the system's kinematic coupling. Then
\begin{equation}
    \bar \tau_d = \begin{bmatrix}
        \tau_{dL} \\ N^\top\tau_{dL}
    \end{bmatrix}.
\end{equation}
In this derivation $N^\top\tau_{dL}$, maps task-space disturbances into the shape-space manifold.
By focusing on the estimation of $\tau_{dL}$, we avoid the redundancy of defining an independent shape-space disturbance $\tau_{dS}$. This approach acknowledges that internal perturbations in quadrotor attitude and joint angles are primarily induced by the transmission of external task-space loads through the system's time-varying geometry $(N(\zeta))$, rather than originating from isolated internal sources.
For the rigorous calculation of $\bar M (\zeta)$ and $\bar G$, refer to the appendices \ref{bar M} and \ref{bar G}.
We now calculate $\bar C$ using the skew-symmetry property of a simple mechanical system, rather than directly computing ${\mathcal{B}(\zeta)}^{\top}\Bigl(M(q)\dot{\mathcal{B}}(\zeta) + C(\zeta,\dot{\zeta})\mathcal{B}(\zeta)\Bigr)$. A fundamental property of Lagrangian dynamics, given by \eqref{dynamics}, states that $\dot{M}(q)-2C$ is skew-symmetric. Using this property, we show that ${\dot{\bar{M}}(\zeta)}-2\bar{C}$ is also skew-symmetric.
\begin{lemma} \label{barM - 2C skew symmetric}
    If $\dot M - 2C$ is skew symmetric according to the standard dynamics of \eqref{dynamics}, then $\dot {\bar M} - 2\bar C$ will also be skew symmetric from the new decoupled dynamics \eqref{new dynamics}.
\end{lemma}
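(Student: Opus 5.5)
The plan is to take $\bar C$ to be exactly the matrix produced by the projection, $\bar C := \mathcal{B}(\zeta)^\top\bigl(M(q)\dot{\mathcal{B}}(\zeta) + C\,\mathcal{B}(\zeta)\bigr)$, and $\bar M := \mathcal{B}^\top M \mathcal{B}$, which equals the block-diagonal form \eqref{M_bar}. The skew-symmetry of $\dot{\bar M} - 2\bar C$ should then follow from a direct algebraic identity, without computing any Christoffel symbols of the reduced system.

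First, differentiate $\bar M = \mathcal{B}^\top M \mathcal{B}$ with the product rule:
\begin{equation*}
\dot{\bar M} = \dot{\mathcal{B}}^\top M \mathcal{B} + \mathcal{B}^\top \dot M \mathcal{B} + \mathcal{B}^\top M \dot{\mathcal{B}}.
\end{equation*}
Second, subtract $2\bar C = 2\mathcal{B}^\top M\dot{\mathcal{B}} + 2\mathcal{B}^\top C\mathcal{B}$ and regroup:
\begin{equation*}
\dot{\bar M} - 2\bar C = \mathcal{B}^\top(\dot M - 2C)\mathcal{B} + \bigl(\dot{\mathcal{B}}^\top M \mathcal{B} - \mathcal{B}^\top M \dot{\mathcal{B}}\bigr).
\end{equation*}

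The first term is a congruence transform of a skew-symmetric matrix. Writing $S := \dot M - 2C$, we have $(\mathcal{B}^\top S\mathcal{B})^\top = \mathcal{B}^\top S^\top \mathcal{B} = -\mathcal{B}^\top S\mathcal{B}$, so it is skew-symmetric. The second term has the form $A^\top - A$ with $A = \mathcal{B}^\top M\dot{\mathcal{B}}$, using $M = M^\top$. Any such difference is skew-symmetric, and a sum of skew-symmetric matrices is skew-symmetric, which proves the claim.

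The main obstacle is a bookkeeping point, not an analytic one. I need to check that the $\bar C$ used in \eqref{new dynamics} really is this projected matrix, and that the gravity and input terms do not enter it. They do not, since they carry no $\nu$-dependence. I also need the identity $\dot{\mathcal{B}}\nu$ with $\dot{\mathcal{B}} = \begin{bmatrix}0 & \dot N\\ 0 & 0\end{bmatrix}$ to be consistent with differentiating $\dot q = \mathcal{B}\nu$, giving $\ddot q = \mathcal{B}\dot\nu + \dot{\mathcal{B}}\nu$. This holds because $\mathcal{B}$ depends only on $\zeta$, which is smooth along trajectories.

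I would also remark that the stronger statement $\nu^\top(\dot{\bar M} - 2\bar C)\nu = 0$ follows immediately from skew-symmetry. Because $\bar M$ is block diagonal, the locked block of $\dot{\bar M} - 2\bar C$ is simply $-2\bar C_{11}$, and this should vanish since $m_{qm}I_3$ is constant. That check is not needed for the lemma itself.
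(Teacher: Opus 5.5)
Your proposal is correct and is essentially the paper's own proof: you differentiate $\bar M=\mathcal{B}^\top M\mathcal{B}$ by the product rule, regroup $\dot{\bar M}-2\bar C$ into $\mathcal{B}^\top(\dot M-2C)\mathcal{B}$ plus $\dot{\mathcal{B}}^\top M\mathcal{B}-\mathcal{B}^\top M\dot{\mathcal{B}}$, and observe that each piece is skew-symmetric. One wording slip in your closing remark: $\nu^\top(\dot{\bar M}-2\bar C)\nu=0$ is a consequence of skew-symmetry, hence a weaker statement, not a stronger one.
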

\begin{proof} The fundamental property of Lagrangian dynamics in standard dynamics is given by \eqref{dynamics}, which states that $\dot{M}(q)-2C$ is skew-symmetric. So, without loss of generality, let us consider $\mathcal{B}(\zeta) = \mathcal{B}$ and $\dot {\bar M}(\zeta) = \dot{\bar M}$. Therefore
 \begin{equation}
\dot{\bar{M}}-2\bar{C}
=
\frac{d}{dt}(\mathcal{B}^\top M\mathcal{B})-2\mathcal{B}^\top (M\dot{\mathcal{B}}+C\mathcal{B}).
\end{equation}
Applying the product rule to the first term:
\begin{equation}
\begin{aligned}
\dot{\bar{M}}-2\bar{C}
&=
{\mathcal{\dot B}}^\top M\mathcal{B} + \mathcal{B}^\top \dot{M}\mathcal{B} + \mathcal{B}^\top M\dot{\mathcal{B}}\\
&\quad
-2\mathcal{B}^\top M{\mathcal{\dot B}}-2\mathcal{B}^\top C\mathcal{B}.
\end{aligned}
\end{equation}
Combining terms with respect to $\dot M - 2C$:
\begin{equation}
\dot{\bar{M}}-2\bar{C}
=
\mathcal{B}^\top(\dot{M}-2C)\mathcal{B}
+
(\dot{\mathcal{B}}^\top M\mathcal{B}-\mathcal{B}^\top M\dot{\mathcal{B}}).
\end{equation}
Let
$
A=\mathcal{B}^\top (\dot{M}-2C)\mathcal{B}, \,\,  
B=\dot{\mathcal{B}}^\top M\mathcal{B}-\mathcal{B}^\top M\dot{\mathcal{B}}$.
Since $(\dot{M}-2C)$ is skew-symmetric, then
$
A^\top
\mathcal{B}^\top(-(\dot{M}-2C))\mathcal{B}
=
-A$. Thus, $A$ is skew-symmetric; similarly, 
\[
B^\top
=
(\mathcal{B}^\top M\dot{\mathcal{B}})^\top -(\dot{\mathcal{B}}^\top M\mathcal{B})^\top = - B.\]
This yields that $B$ is also skew-symmetric.
Since the sum of two skew-symmetric matrices is skew-symmetric, we conclude that
\[\dot{\bar{M}}-2\bar{C},\] is skew-symmetric.
\end{proof}
 By Lemma \eqref{barM - 2C skew symmetric}, it suffices to compute $\bar{C}$ from $\bar{M}$ using the Christoffel symbols.
From the decoupled dynamics of \eqref{new dynamics}, we have $\bar M(\zeta)$ from equation \eqref{M_bar}.
Let $\bar{C}$ be partitioned as
\[
\bar{C}
= \begin{bmatrix}
\bar{C}_{11} & \bar{C}_{12}\\
\bar{C}_{21} & \bar{C}_{22}
\end{bmatrix},
\]
Hence, the Christoffel symbols $C_{ijk}$ associated with the \eqref{M_bar}, implies,
\[\bar{C}_{11}=0,\qquad
\bar{C}_{12}=0,
\qquad
\bar{C}_{21}=0.\]
and for ${i,j,k} \in \zeta$, the derivatives $\frac{\partial \bar M_{ij}}{\partial \zeta_k}$ are non-zero because the internal inertia $M_S$ depends only on the vector variable $\zeta$. This yields the standard Coriolis matrix after the decomposition is performed. 
\begin{align} \label{bar_C}
\bar{C}=
\begin{bmatrix}
0 & 0\\
0 & C_S(\zeta,\dot{\zeta})
\end{bmatrix},
\end{align}
where $C_S$ satisfies the standard robotic property
\[
\dot{M}_S-2C_S(\zeta, \dot \zeta),
\] is skew-symmetric; which is rigorously proved in the appendix \ref{skew symmetric of dotM_s-2C_s}.\\
 In this passive decomposition framework, the transformation \eqref{decomposition} is designed to decouple the total system.  This physical requirement ensures that there is no Coriolis coupling between the total system translation and the internal configuration variables.
This means that the translation dynamics of the QRM system are given by the locked system, which has no Coriolis terms therein, and the rotational dynamics, referred to as the shape system, is free from the gravity term.  \\
Substituting equations \eqref{M_bar}, \eqref{G_bar}, and \eqref{bar_C} into \eqref{new dynamics}, we can transform the decoupled dynamics \eqref{new dynamics} into two second order systems
    \begin{align} 
    m_{qm} {\ddot \eta}_{qm} + G_L &= \tau_L + \tau_{dL}, \label{eq:locked}\\
    M_S(\zeta){\ddot \zeta} + C_S(\zeta,\dot{\zeta})\dot{\zeta} &= \tau_S + N^\top \tau_{dL}.\label{eq:shape}
\end{align}
where
\begin{enumerate}
    \item The locked system \eqref{eq:locked} describes the $CoM$ 
    dynamics of the QRM system with $\dot{\eta}_{qm} = \dot{\eta}_{0}$, 
    $m_{qm} = m_0 + m_1$, 
    $G_L = -[0\;\;0\;\; m_{qm} g]^T \in \mathbb{R}^3$, 
    and $\tau_L = u_{t} R_0(\Phi)\mathbf{e}_3$, 
    where $\eta_{qm}$ is the total $CoM$ position of the QRM system and $u_{t}$ is the thrust scalar, and $R_0(\Phi)$ is the rotation of the quadrotor.
    \item The shape system \eqref{eq:shape} describes the internal rotational 
    dynamics of $\zeta = [\Phi \;\; \Theta]^\top$ of the QRM system with full actuation 
    $\tau_S = \begin{bmatrix}
         \tau_{s1} \\ \tau_{s2}
    \end{bmatrix}\in \mathbb{R}^{5 \times 1}$, where $\tau_{s1} = -(1/m_{qm})M^\top_{12}\tau_L+u_\Phi$ and $\tau_{s2} = -(1/m_{qm})M^\top_{13}\tau_L+u_\Theta$.
\end{enumerate}
Here $M_S \in \mathbb{R}^{5\times5}$ 
    is a positive symmetric inertia matrix.\\
These resulting decoupled subsystems, formally defined by \eqref{eq:locked}-\eqref{eq:shape}, significantly reduce controller execution latency, thereby ensuring the numerical robustness and high-frequency stability required for hardware-in-the-loop experiments and deployment on a physical robot.
\begin{remark}
The dynamics described by \eqref{eq:locked}-\eqref{eq:shape} exhibit significant parallels with non-holonomic systems. In the study of non-holonomic mechanics, specifically those characterized by non-integrable constraints, the system's kinetic energy can be analyzed by decomposing the velocity space into a direct product of horizontal and vertical subspaces \cite{Bloch1996}. This structure is often best identified through the framework of Lie groups and their associated Lie algebras \cite{MarsdenRatiuAbraham2022}. Building on this geometric foundation, the tangent bundle $T_q\mathcal{M}$  is partitioned into a tangential (horizontal) distribution $\mathcal{D}_q$, which represents the set of allowable velocities, and a normal (vertical) distribution $\mathcal{D}^\perp_q$, which accounts for the directions restricted by the non-holonomic constraints. As formally established in \eqref{decomposition}, this geometric decomposition effectively captures the full constrained dynamics of the system within a unified manifold representation.
\end{remark} 
\begin{remark}
The quadrotor rotor speeds $\omega_i$ generate the total thrust $u_{t}$ and the body torques $\boldsymbol{\tau} = [\tau_{\phi_r}, \tau_{\phi_p}, \tau_{\phi_y}]^T$. This relationship is mapped via the control allocation matrix such as:
\begin{equation}
\begin{bmatrix}
u_{t}\\
\tau_{\phi_r}\\
\tau_{\phi_p}\\
\tau_{\phi_y}
\end{bmatrix}
=
\begin{bmatrix}
a & a & a & a\\
al & 0 & -al & 0\\
0 & -al & 0 & al\\
b & -b & b & -b
\end{bmatrix}
\begin{bmatrix}
\omega_1^2\\
\omega_2^2\\
\omega_3^2\\
\omega_4^2
\end{bmatrix},
\end{equation}
where $a$ is the thrust coefficient, $b$ is the drag coefficient, and $l$ denotes the length of the arm from the $CoM$ to each rotor.
\end{remark}
\begin{assumption} \label{gain_matrix}
Let $\mathcal{D}_n^+$ is a set of all positive definite diagonal matrices, i.e.,
\[
\mathcal{D}_n^+
=
\left\{
\operatorname{diag}(d_1,\ldots,d_n)\in\mathbb{R}^{n\times n}
\,\middle|\,
d_i>0,\; i=1,\ldots,n
\right\}.
\]
\end{assumption}
\section{Control Design and Stability Analysis}
\label{Control design}
To introduce the state-space representation of dynamics \eqref{eq:locked}–\eqref{eq:shape}, we define the state vectors as follows: 
$x_1 = \eta_{qm}$, $x_2 = \dot\eta_{qm}$,$y_1 = \zeta$, and $y_2 = \dot\zeta$, where $x_1$ and $y_1$ denote the locked and shape positions, respectively, while $x_2$ and $y_2$ denote their corresponding velocities. Consequently, the resulting state-space dynamics are expressed as: 
\begin{align}
\left.
\begin{aligned} \label{eq:locked_ss}
\dot x_1 &= x_2, \\
m_{qm}\dot x_2 &= 
\big(\tau_L - G_L\big) + \tau_{dL}
\end{aligned}
\right\},
\\[6pt]
\left.
\begin{aligned}\label{eq:shape_ss}
\dot y_1 &= y_2, \\
{M_{S}(y_{1})}\dot y_2 &= 
\tau_{S} - C_{S} y_2 + N(y_{1})^\top \tau_{dL}
\end{aligned}
\right\}.
\end{align}
By utilizing the structure of state space dynamics \eqref{eq:locked_ss}-\eqref{eq:shape_ss}, the disturbance observer based control is designed to demonstrate the stability of the system.
\subsection{Problem Statement}
Consider $x_{1d}$ and $y_{1d}$ are the desired $CoM$ $\eta^d_{qm}$ and desired internal configuration $\zeta^d$ of the QRM system, respectively. Moreover, let $\dot x_{1d}$ and $\dot y_{1d}$ be the desired velocities, the same as $\dot \eta^d_{qm}$ and $\dot\zeta^d$, respectively. 
Thus, the tracking errors for the locked and shaped variables are defined as:
\begin{equation} \label{error states}
    \begin{aligned}
        e_{x_1} &= x_1 - x_{1d}, \quad &e_{x_2} &= x_2 - \dot{x}_{1d} \\
        e_{y_1} &= y_1 - y_{1d}, \quad &e_{y_2} &= y_2 - \dot{y}_{1d}.
    \end{aligned}
\end{equation}
We define two filter variables $S_l$ and $S_s$ for the locked and shape systems, respectively, such that
\begin{align} 
    S_l &= e_{x_2} + \lambda_l e_{x_1} \label{locked surface}, \\
    S_s &= e_{y_2} + \lambda_s e_{y_1} \label{shape surface}.
\end{align}
Taking the time derivative 
\begin{align}
    \dot S_l &= \dot e_{x_2} + \lambda_l \dot e_{x_1}, \label{derivative locked surface}\\
    \dot S_s &= \dot e_{y_2} + \lambda_s \dot e_{y_1} \label{derivative shape surface},
\end{align}
where $\lambda_l \in \mathcal{D}_3^+$, $\lambda_s \in \mathcal{D}_5^+$ are based on Assumption-\ref{gain_matrix}.
The system is subject to an unknown constant disturbance vector $\tau_{dL} \in \mathbb{R}^3$ acting on the Locked system (e.g., wind or unmodeled forces). Due to the mechanical coupling defined by the passive decomposition, this disturbance projects into the Shape system via the state-dependent mapping matrix $N(\zeta)\in \mathbb{R}^{5\times 3}$.\\
Standard adaptive control will create an algebraic loop due to the thrust of the quadrotor required to calculate the desired attitude of the quadrotor. To overcome this problem, we employ a Disturbance Observer (DO).\\
Consider an auxiliary state variable $z\in \mathbb{R}^3$; then define the estimate of the disturbance $\hat \tau_{dL}$ such that
\begin{equation}\label{error estimation}
    \hat \tau_{dL} = z+\gamma_lm_{qm}S_l,
\end{equation}
where $\gamma_l>0$ is the observer gain. The auxiliary state $z$ is updated via the following Observer Dynamics 
\begin{equation} \label{OB dynamics}
    \dot z = -\gamma_l(\tau_L - G_L - m_{qm}\ddot x_{1d} + \hat\tau_{dL} + m_{qm}\lambda_l \dot  e_{x_1}).
\end{equation}
Then, the control objective is twofold:
\begin{enumerate}
    \item \textbf{Locked Motion Tracking:} 
    Design a control input $\tau_L$ such that the translational configuration $x_1 \in \mathbb{R}^3$ and its velocity $x_2\in\mathbb{R}^3$ exponentially track the desired smooth trajectory $(x_{1d}, \dot{x}_{1d})$. Specifically, we require:
    \begin{equation}
        \lim_{t \to \infty} \| e_{x_1}(t) \| \rightarrow 0, \quad \lim_{t \to \infty} \| e_{x_2}(t) \| \rightarrow 0.
    \end{equation}
    \item \textbf{Shape system tracking:} 
    Design a control input $\tau_{S}$ such that the shape variables $y_1 \in \mathbb{R}^5 $ converge to a desired configuration $y_{1d}\in \mathbb{R}^5$ and their corresponding velocities $y_{2d}\in \mathbb{R}^5$. That is:
    \begin{equation}
        \lim_{t \to \infty} \| e_{{y}_1}(t) \| \rightarrow 0, \quad \lim_{t \to \infty} \| e_{{y}_2}(t) \| \rightarrow 0.
    \end{equation}
    This ensures the manipulator not only reaches the target posture but also stabilizes without residual oscillation.
    \item \textbf{External disturbance} : With the control inputs $\tau_L$ and $\tau_S$, the estimated external disturbance in the locked system $\hat{\tau}_{dL}\in\mathbb{R}^3$ converges to the actual external disturbance $\tau_{dL}\in\mathbb{R}^3$. Consequently,
$N^\top \hat{\tau}_{dL}\in\mathbb{R}^5$ converges to $N^\top \tau_{dL}\in\mathbb{R}^5$.
\end{enumerate}
Accordingly, the following control laws are proposed:
\begin{align*} 
    u_{ct} &=  \left( G_L - \hat \tau_{dL}+ m_{qm} \ddot{x}_{1d} - m_{qm}\lambda_l\dot e_{x_1} - K_lS_l \right). 
\end{align*}
Therefore, locked control $\tau_L$ will be (see Corollary \ref{tauL = uct}), 
\begin{equation}\label{eq:ps_locked_ctrl}
    \tau_L = u_{ct},
\end{equation}
and the shape control vector $\tau_S$ is designed as,
\begin{equation}\label{eq:ps_shape_ctrl}
\begin{aligned}
\tau_{S}
&=
C_S(y_1,y_2) (-\lambda_se_{y_1} + \dot y_{1_d})- N(y_{1})^\top\hat\tau_{dL} + M_S(y_{1})\ddot y_{1d} - \\
&\quad
M_S(y_1)\lambda_S \dot e_{y_1}- K_s S_s,
\end{aligned}
\end{equation}
where $K_l \in \mathcal{D}_3^+$ and $K_s\in \mathcal{D}_5^+$ are gain matrices.
At last the estimation error is defined as 
\begin{equation} \label{error disturbance}
    \tilde \tau_{dL} = \tau_{dL} - \hat\tau_{dL},
\end{equation}
under the Assumption-\ref{assump_1} $\dot\tau_{dL} = 0$, and $\dot {\hat\tau}_{dL}$ calculated from \eqref{error estimation} gives
\begin{equation}\label{eqn:dist_extimator}
    \dot {\tilde \tau}_{dL} = -\dot z-\gamma_lm_{qm}\dot S_l.
\end{equation}
Later, substituting \eqref{eq:locked_ss},  \eqref{derivative locked surface}, and observer dynamics \eqref{OB dynamics} into \eqref{eqn:dist_extimator} yields this,
\begin{equation*}
    \begin{aligned}
        \dot {\tilde\tau}_{dL} =& \Big(\gamma_l(\tau_L - G_L - m_{qm}\ddot x_{1d} + \hat\tau_{dL} + m_{qm}\lambda_l \dot  e_{x_1}) - \\&\gamma_l (\tau_{L} - G_L + \tau_{dL} - m_{qm}\ddot x_{1d}+m_{qm}\lambda_l\dot e_{x_1})\Big),\\
        =& \gamma_l \hat \tau_{dL} - \gamma_l\tau_{dL}.
    \end{aligned}
\end{equation*}
\begin{equation}
    \dot {\tilde{\tau}}_{dL} = -\gamma_l\tilde \tau_{dL}.
\end{equation}
\begin{assumption}
To ensure the existence of the desired attitude variable $\Phi$ and valid control inputs, we require the desired trajectory to satisfy $x_{1d} \in \mathcal{C}^{4}$ and $y_{1d} \in \mathcal{C}^{2}$. Specifically, $x_{1d}$ is required to be a four-times continuously differentiable function, while $y_{1d}$ is required be at least twice continuously differentiable.
\end{assumption}
\subsection{Desired Attitude Generation}
The quadrotor system is underactuated, as the translational motion along the (X)- and (Y)-axes is coupled with the vehicle attitude. To track the desired CoM position $(\eta_{qm}^d)$ of the QRM system, a virtual control force $(u_{ct}\in\mathbb{R}^3)$ is defined based on the translational tracking error. Assuming a standard (ZYX) Euler-angle convention, the desired roll $(\phi_r^d)$ and pitch $(\phi_p^d)$ angles are obtained by inverting the rotation matrix $R_0$ with respect to the required thrust direction. Consequently, the desired roll and pitch angles can be expressed in terms of the required force
\begin{equation}
\begin{aligned}
 \phi_p^d &= \tan^{-1}\left[\frac{\cos(\phi^d_y)\langle u_{ct},\mathbf{e}_1\rangle + \sin(\phi^d_y)\langle u_{ct},\mathbf{e}_2\rangle}{\langle u_{ct},\mathbf{e}_3\rangle }\right] \label{eq:pitch desired}
\end{aligned},
\end{equation}
\begin{equation}
    \begin{aligned}
        \phi_r^d &= \tan^{-1}\left[\frac{\cos(\phi^d_p)\Bigl(sin(\phi^d_y)\langle u_{ct},\mathbf{e}_1\rangle - \cos(\phi^d_y)\langle u_{ct},\mathbf{e}_2\rangle\Bigr)}{\langle u_{ct},\mathbf{e}_3\rangle }\right] \label{eq:roll desired}
    \end{aligned}.
\end{equation}
While the desired yaw angle $(\phi_y^d)$ can be arbitrarily assigned by the user for task-specific requirements (such as orienting the manipulator).
In the next subsection, the stability of the QRM system under the proposed control laws \eqref{eq:ps_locked_ctrl} and \eqref{eq:ps_shape_ctrl} is established using Lyapunov theory.
\subsection{Lyapunov Stability Analysis}
For stability analysis, it is assumed that the attitude control law guaranties exponential convergence of the thrust direction $R_0\mathbf{e}_3$ to the desired force direction $u_{ct}/\|u_{ct}\|$. Under this standard time-scale separation assumption, the force projection satisfies $(u_{ct} \cdot R_0\mathbf{e}_3)R_0\mathbf{e}_3 \rightarrow u_{ct}$.
\begin{theorem}
Consider the decoupled perturbed dynamics of the Quadrotor-Robotic-Manipulator (QRM) system described by the state-space representations \eqref{eq:locked_ss}-\eqref{eq:shape_ss}. Let the filter variables $S_l$ and $S_s$ be defined as in \eqref{locked surface}-\eqref{shape surface}. 
Under the control laws \eqref{eq:ps_locked_ctrl} and
\eqref{eq:ps_shape_ctrl}, with control gain matrices $K_l$ and $K_s$ in $\mathcal{D}_n^+$ based on Assumption-\ref{gain_matrix}, and $\gamma_l$ chosen as a positive constant. Then, the following results hold:
\begin{enumerate}
    \item The tracking errors of the QRM system exponentially converge to zero as $t \to \infty$ (i.e.,$(x_1, y_1, x_2, y_2, \tau_{dL})\rightarrow(x_{1d}, y_{1d}, \dot{x}_{1d}, \dot{y}_{1d}, \hat{\tau}_{dL}) \;\; \text{as } t \to \infty$).
    \item The disturbance estimates $\hat \tau_{dL}$ and $N^\top \hat \tau_{dL}$ exponentially converge to their corresponding external disturbances $\tau_{dL}$ and $N^\top \tau_{dL}$ as $t \to \infty$.
\end{enumerate}
This stability is structurally enforced by the virtual control inputs $u_{ct}$, which dynamically generate the desired roll $\phi_r^d$ and pitch $\phi_p^d$ to compensate for the underactuated under-base.
\end{theorem}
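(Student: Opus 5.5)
The proof will use a composite quadratic Lyapunov function built from the two filter variables and the disturbance estimation error. Specifically,
\[
V = \tfrac{1}{2} m_{qm} S_l^\top S_l + \tfrac{1}{2} S_s^\top M_S(y_1) S_s + \tfrac{1}{2\gamma_l}\tilde\tau_{dL}^\top \tilde\tau_{dL}.
\]
$V$ is positive definite and radially unbounded in $(S_l,S_s,\tilde\tau_{dL})$ because $M_S$ is symmetric positive definite. The steps are as follows.

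\textbf{Step 1: locked closed loop.} From \eqref{eq:locked_ss} and \eqref{derivative locked surface},
\[
m_{qm}\dot S_l = \tau_L - G_L + \tau_{dL} - m_{qm}\ddot x_{1d} + m_{qm}\lambda_l \dot e_{x_1}.
\]
Substituting $\tau_L = u_{ct}$ from \eqref{eq:ps_locked_ctrl}, justified by the time-scale separation assumption stated before the theorem, gives $m_{qm}\dot S_l = -K_l S_l + \tilde\tau_{dL}$.

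\textbf{Step 2: shape closed loop.} From \eqref{eq:shape_ss}, write $y_2 = S_s - \lambda_s e_{y_1} + \dot y_{1d}$, so that $C_S y_2 = C_S S_s + C_S(-\lambda_s e_{y_1} + \dot y_{1d})$. Then
\[
M_S\dot S_s = \tau_S - C_S y_2 + N^\top\tau_{dL} - M_S\ddot y_{1d} + M_S\lambda_s\dot e_{y_1}.
\]
Inserting \eqref{eq:ps_shape_ctrl} yields $M_S\dot S_s = -C_S S_s - K_s S_s + N^\top\tilde\tau_{dL}$. Here I read the sign of the $M_S\lambda_s\dot e_{y_1}$ term in the control law as the one that cancels the corresponding term; I would check this convention carefully, since the locked control has the same structure.

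\textbf{Step 3: Lyapunov derivative.} Using $\dot{\tilde\tau}_{dL} = -\gamma_l\tilde\tau_{dL}$ (already derived) and the skew-symmetry of $\dot M_S - 2C_S$ (Lemma \ref{barM - 2C skew symmetric} and the appendix), the Coriolis term drops out:
\[
\dot V = -S_l^\top K_l S_l - S_s^\top K_s S_s + S_l^\top\tilde\tau_{dL} + S_s^\top N^\top\tilde\tau_{dL} - \|\tilde\tau_{dL}\|^2 .
\]

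\textbf{Step 4: bounding the cross terms.} This is the main obstacle. The cross terms are handled with Young's inequality, $S^\top\tilde\tau \le \tfrac{\epsilon}{2}\|S\|^2 + \tfrac{1}{2\epsilon}\|\tilde\tau\|^2$. One needs a uniform bound $\|N(y_1)\| \le \bar n$. This holds because $N = -\tilde M_{12}/m_{qm}$ consists only of bounded trigonometric functions of $\zeta$ times the constants $m_1$ and $\ell$, and $T(\Phi)$ is bounded. With that bound, $\dot V \le -c_1\|S_l\|^2 - c_2\|S_s\|^2 - c_3\|\tilde\tau_{dL}\|^2$ with all $c_i > 0$ provided $\lambda_{\min}(K_l)$ and $\lambda_{\min}(K_s)$ are large relative to $(1+\bar n^2)$. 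If the gain-free statement must hold for every positive gain, I would instead use a weighted function $V$ with a large weight $\kappa$ on the $\tilde\tau_{dL}$ term, which is legitimate because $\tilde\tau_{dL}$ obeys its own autonomous stable ODE. The cascade then gives exponential decay for any $K_l, K_s > 0$.

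\textbf{Step 5: exponential convergence.} Use the bounds $\lambda_{\min}(M_S) \le M_S \le \lambda_{\max}(M_S)$, where the uniform bounds again come from the boundedness of trigonometric entries and require that $T(\Phi)$ stays away from the singularity $\phi_p = \pm\pi/2$. Then $\dot V \le -\alpha V$, so $S_l$, $S_s$ and $\tilde\tau_{dL}$ decay exponentially.

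\textbf{Step 6: back to tracking errors.} Since $\dot e_{x_1} = -\lambda_l e_{x_1} + S_l$ and $\dot e_{y_1} = -\lambda_s e_{y_1} + S_s$ are exponentially stable linear filters driven by exponentially decaying inputs, the errors $e_{x_1}$ and $e_{y_1}$, and hence $e_{x_2}$ and $e_{y_2}$, decay exponentially. Finally, $\|N^\top\tilde\tau_{dL}\| \le \bar n\|\tilde\tau_{dL}\|$ gives the second claim.
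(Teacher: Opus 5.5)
Your proposal is correct and follows essentially the paper's route. Both arguments use a composite Lyapunov function in $(S_l,S_s,\tilde\tau_{dL})$ and the closed loops $m_{qm}\dot S_l=-K_lS_l+\tilde\tau_{dL}$ and $M_S\dot S_s=-C_SS_s-K_sS_s+N^\top\tilde\tau_{dL}$. Your sign reading is right: the $-M_S\lambda_s\dot e_{y_1}$ in \eqref{eq:ps_shape_ctrl} cancels the $+M_S\lambda_s\dot e_{y_1}$ from the dynamics. The remaining steps also match: skew-symmetry of $\dot M_S-2C_S$, Young's inequality with a uniform bound on $\|N\|$ (the paper's Claim~\ref{Norm of N}), and the first-order filters back to $e_{x_1},e_{y_1}$.

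Your large-weight $\kappa$ variant and your explicit use of uniform bounds on $M_S$ (away from $\phi_p=\pm\pi/2$) are in fact slightly more careful than the paper. The paper instead imposes specific gain structures ($K_l=(1/2\epsilon_1)I_3+\bar K_l$, $\gamma_l=\epsilon_1/2+\bar\gamma_l$, $K_s=(\epsilon_2/2)I_5+\bar K_s$), so its proof does not cover arbitrary positive gains as the theorem states. It also passes to $\dot V\le-\Gamma V$ without accounting for $m_{qm}$ and $M_S$ in $V$.
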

\begin{proof}
    To investigate the stability of the  Locked subsystem \eqref{eq:locked_ss}, we propose the following positive-definite and radially unbounded candidate Lyapunov function of the filter variable $S_l$ and error disturbance $\tilde\tau_{dL}$:
    \begin{equation} \label{eq:Lyapunov_trans}
        V_l(S_l, \tilde\tau_{dL}) = \frac{1}{2}\Bigl(S_l^\top m_{qm}S_l  + \tilde\tau_{dL}^\top \tilde\tau_{dL} \Bigr).
    \end{equation}
    Taking the time derivative of \eqref{eq:Lyapunov_trans}, yields:
    \begin{equation}
\dot V_l(S_l, \tilde\tau_{dL}) = \Bigl(S_l^\top m_{qm}\dot S_l + \tilde\tau_{dL}^\top \dot {\tilde\tau}_{dL} \Bigr).
\end{equation}
Substituting \eqref{derivative locked surface} and \eqref{derivative shape surface} the derivative simplifies to:
\begin{equation}
\begin{aligned}
\dot V_l
&=
S_l^\top(m_{qm}\dot x_2 - m_{qm}\ddot x_{1d} + m_{qm}\lambda_l \dot e_{x_1}) + \tilde\tau_{dL}^\top ( \dot {\tau}_{dL} - \dot {\hat{\tau}}_{dL}), \\
& =
S_l^\top(\tau_L - G_L + \tau_{dL} - m_{qm}\ddot x_{1d} + m_{qm}\lambda_l \dot e_{x_1}) + \tilde\tau_{dL}^\top ( \dot {\tau}_{dL} - \dot {\hat{\tau}}_{dL}). \label{vl_dot}
\end{aligned}
\end{equation}
Now, expressing and substituting the locked dynamics \eqref{eq:locked_ss}, Assumption-\ref{assump_1}, estimator dynamics \eqref{eqn:dist_extimator} and \eqref{OB dynamics} in \eqref{vl_dot}, we get  
\begin{equation}
\begin{aligned}
\dot V_l
&=
S_l^\top(\tau_L - G_L + \tau_{dL} - m_{qm}\ddot x_{1d} + m_{qm}\lambda_l \dot e_{x_1}) -\tilde\tau_{dL}^\top \gamma_l {\dot {\hat\tau}}_{dL}
\end{aligned}.
\end{equation}
After substituting the control law from \eqref{eq:ps_locked_ctrl}, 
\begin{equation}
\begin{aligned}
\dot V_l
&=
S_l^\top(\tilde\tau_{dL} - K_lS_l)  -\tilde\tau_{dL}^\top \gamma_l {\tilde\tau}_{dL},\\
&= -S_l^\top K_l S_l + \tilde\tau_{dL}^\top S_l - \gamma_l{\tilde\tau}_{dL}^\top {\tilde\tau}_{dL}.
\end{aligned}
\end{equation}
Applying Young's inequality and Rayleigh-Ritz inequality \cite{Anil27072026},
\begin{equation}
   \begin{aligned}
    \dot V_l
    &\leq -\lambda_m(K_l)S_l^\top S_l + \frac{\epsilon_1}{2}\|\tilde\tau_{dL}\|^2 + \frac{1}{2\epsilon_1}\|S_l\|^2 - \gamma_l\tilde\tau_{dL}^\top {\tilde\tau}_{dL},\\
    &= -\Big(\lambda_{m}(K_l) - 1/2\epsilon_1\Big)S_l^\top S_l - (\gamma_l-\epsilon_1/2)\tilde \tau_{dL}^\top\tilde \tau_{dL},
    \end{aligned}
\end{equation}
where $\lambda_{m}(\cdot) = \lambda_{min}(\cdot)$. Choosing the control gains $K_l = (1/2\epsilon_1)I_3 + \bar K_l$ and $\gamma_l = (\epsilon_1/2) + \bar \gamma_l$, where $\bar K_l\in \mathcal{D}_3^+$, $\epsilon_1, \bar \gamma_l> 0$ and $I_3$ is the identity matrix of order 3.
With this the preceding expression will be simplified to,
\begin{equation}\label{final_vl_dot}
        \dot V_l = -\lambda_m(\bar K_l)S_l^\top S_l - \bar\gamma_l\tilde\tau_{dL}^\top \tilde\tau_{dL}.
\end{equation}
Next for the shape subsystem state-space dynamics, consider the candidate Lyapunov function using the filter variable \eqref{shape surface} as:
\begin{equation}
V_s(S_s) = \frac{1}{2}S_s^\top M_S S_s.
\label{eq:lyapunov_shape}
\end{equation}
Taking time derivative,
\begin{equation}
\dot V_s(S_s) = S_s^\top M_S \dot S_s + S_s^\top \dot M_S  S_s.
\end{equation}
Substituting \eqref{derivative shape surface} and using the state-space dynamics \eqref{eq:shape_ss}
the derivative simplifies to:
\begin{equation}
\begin{aligned}
\dot V_s=S_s^\top(\tau_S - C_s(y_1,y_2)y_2 + N^\top \tau_{dL} - M_S\ddot y_{1d} + M_S\lambda_s\dot e_{y_1}) \\ +S_s^\top \dot M_S  S_s.
\end{aligned}
\end{equation}
By substituting \eqref{error states},\eqref{shape surface},\eqref{eq:ps_shape_ctrl}, and exploiting the property of Euler-Lagrange systems that $\dot M_s - 2C_s$ is skew-symmetric, the Lyapunov derivative reduces to,
\begin{equation} \label{eq:deri_lya_sha}
\dot V_s = S_s^\top(N^\top\tilde\tau_{dL} - K_sS_s).
\end{equation}
This expression can be re-written by using Young's and Rayleigh-Ritz inequality, 
\begin{equation}
     \dot V_s  \leq  - \lambda_m(K_s)S_s^\top S_s  + \frac{\epsilon_2}{2}||S_s||^2 + \frac{1}{2\epsilon_2}  \|N^\top \tilde\tau_{dL}\|.  
\end{equation}
Referring definition of spectral norm, which says $\|N^\top\tilde\tau_{dL}\|^2 \leq  \|N\|^2 \|\tilde{\tau}_{dL}\|^2$, this gives:
\begin{equation}
    \dot V_s \leq -\Big(\lambda_m(K_s) - \frac{\epsilon_2}{2}\Big)S_s^\top S_s +\frac{1}{2\epsilon_2} \|N\|^2 \|\tilde{\tau}_{dL}\|^2.
\end{equation}
Selecting  $K_s = (\epsilon_2/2)I_5 + \bar K_s$, where $\bar K_s\in \mathcal{D}_5^+$, and $I_5$ is the identity matrix of order 5,
\begin{equation} \label{final_Vs_dot}
    \dot V_s = -\lambda_m(\bar K_s)S_s^\top S_s +\frac{1}{2\epsilon_2} \|N\|^2 \|\tilde{\tau}_{dL}\|^2.
\end{equation}
To analyze the stability of the overall closed-loop system
\eqref{eq:locked_ss}-\eqref{eq:shape_ss}, consider the composite
Lyapunov function candidate,
\begin{equation}
V(S_l,S_s,\tilde{\tau}_{dL})
=
V_l(S_l,\tilde{\tau}_{dL}) + V_s(S_s),
\label{eq:lyapunov}
\end{equation}
where $V_l$ and $V_s$ are the positive-definite Lyapunov functions
associated with the locked and shape subsystems, respectively.
Since both $V_l$ and $V_s$ are positive definite and radially unbounded,
the composite function $V$ is also positive definite and radially
unbounded.
Taking time derivative with respect to time and substituting \eqref{final_vl_dot},\eqref{final_Vs_dot}, yields:
\begin{equation}
\dot V = -\lambda_m(\bar K_l)S_l^\top S_l -\lambda_m(\bar K_s)S_s^\top S_s -\frac{1}{2\epsilon_2}\Big(\bar\gamma_l- \|N\|^2\Big) \|\tilde{\tau}_{dL}\|^2.
\label{final_lya_deri}
\end{equation}
 Based on Claim \ref{Norm of N}, which establishes the upper bound for the norm of N as $\lVert N\rVert^2 \leq \frac{5m_1^2l^2}{4m_{qm}^2}$, the choosing $\bar\gamma_l = \|N\|^2 + 2\epsilon_2\gamma$, where $\epsilon_2,\gamma >0$. Here $I_3$, $I_5$ are the identity matrices of order $3$ and $5$, respectively. After substituting these gains in \eqref{final_lya_deri}, this yields,
\begin{align}
    \dot V \leq - \Gamma V \leq 0,
\end{align}
where $\Gamma = 2 \min(\lambda_m(\bar K_l), \lambda_m(\bar K_s),\gamma)$.
This shows that surfaces $S_l$, $S_S$, and $\tilde \tau_{dL}$ converge to zero exponentially.
This implies $\dot S_l \rightarrow 0$ and $\dot S_s \rightarrow 0$
We have $S_l = e_{x_2} + \lambda_le_{x_1}$; this can be further re-written as \[\dot e_{x_1} = S_l -  \lambda_le_{x_1}.\] Now multiply this by $e^{\lambda_lt}$; this gives,
\begin{equation*}
    \begin{aligned}
       e^{\lambda_lt} \dot e_{x_1} &= e^{\lambda_lt}S_l -  e^{\lambda_lt}\lambda_le_{x_1},\\
    e^{\lambda_lt}S_l    &=e^{\lambda_lt} \dot e_{x_1}+e^{\lambda_lt}\lambda_le_{x_1},\\
    \frac{d}{dt}( e^{\lambda_lt} e_{x_1}) &=e^{\lambda_lt}S_l.
    \end{aligned}
\end{equation*}
Integrating both sides from time 0 to t,
\begin{equation*}
    \begin{aligned}
        e^{\lambda_lt} e_{x_1}(t) - e_{x_1}(0) &= \int_0^t e^{\lambda_lt}S_l dt,\\
         e_{x_1}(t) &=  e^{-\lambda_lt}e_{x_1}(0)+
         e^{-\lambda_lt}\int_0^t e^{\lambda_lt}S_l dt.
    \end{aligned}
\end{equation*}
as time goes to infinity, $e_{x_1}$ will converge to zero. Using $e_{x_1} \rightarrow 0$ and $S_l \rightarrow 0$ in \eqref{locked surface}, this shows that $e_{x_2}$ converges to zero. Similarly, one can show that $e_{y_1}$ and $e_{y_2}$ converge to zero.
This proves that the system dynamics is exponentially stable and the position and velocity states converge to their desired values exponentially, while disturbances $\tau_{dL}$ and $N^\top \tau_{dL}$ converge to the estimated disturbances $\hat \tau_{dL}$ and $N^\top \hat \tau_{dL}$, respectively as $t \to \infty$.
\end{proof}
\section{Simulation Results and analysis} \label{sec: simulation}
This section presents the numerical simulation results to validate the effectiveness and robustness of the proposed control framework for the Quadrotor-Robotic-Manipulator (QRM) system. To evaluate performance under realistic operating conditions, simulations are performed using the full, nonlinear, decoupled state-space perturbed dynamics \eqref{eq:locked_ss}-\eqref{eq:shape_ss} derived in Section~\ref{Control design}. The physical configuration of the simulated platform matches the description in Section~\ref{System modelling} and is illustrated in Fig.~\ref{fig:QRMS}, with specific physical parameters such as component masses, moments of inertia, and geometric dimensions summarized in Table~\ref{tab:sim_params}.
\begin{table}[t]
\caption{Simulation Parameters of the Quadrotor–Manipulator System}
\label{tab:sim_params}
\centering
\begin{tabular}{lll}
\hline
\textbf{Parameter} & \textbf{Symbol} & \textbf{Value} \\
\hline
Quadrotor mass & $m_0$ & 5 kg \\
Manipulator mass & $m_1$ & 1 kg \\
Total system mass & $m_{qm}$ & 6 kg \\
Quadrotor inertia & $\mathbb{I}_0$ & diag$(0.2,0.2,0.2)$ kg·m$^2$ \\
Link inertia & $\mathbb{I}_1$ & diag$(0.002,0.002.0.002)$ kg·m$^2$ \\
Link length & $l$ & 0.5 m \\
Gravitational acceleration & $g$ & 9.81 m/s$^2$ \\
Locked control gain & $\bar K_l$ & diag$(2.5,2.6,2.6)$ \\
Shape control gain & $\bar K_s$ & diag$(5.7,5.6,5.6, 5.8, 5.9)$ \\
Disturbance Gain & $\gamma_l$ & 4 \\
Locked position gain & $\lambda_l$ & diag$(0.5,0.6,0.6)$\\
Shape position gain & $\lambda_s$ & diag$(0.5,0.6,0.6,0.4,0.8)$\\
\hline
\end{tabular}
\end{table}
The control architecture exploited in this validation leverages the structural decoupling between the locked and shape dynamics to enable independent regulation of the total system $CoM$ motion via a geometric controller \eqref{eq:ps_locked_ctrl}, while simultaneously stabilizing the attitude dynamics and RM joint angles through the shape controller \eqref{eq:ps_shape_ctrl}. 
Although the proposed control framework does not directly regulate the end-effector position, it achieves this indirectly through the coordinated control of the system’s $CoM$, quadrotor attitude, and RM joint angles. Specifically, the end-effector's physical location is structurally coupled to these controlled states. For an end-effector vector defined relative to the final link frame as $r_{ee} = [0; 0; -l]^T$, its absolute position in the inertial frame ($\eta_{ee} \in \mathcal{F}_I$) is explicitly mapped via:
\begin{equation}
\eta_{ee} = \eta_0 + R_o R_1^b r_{ee},
\end{equation}
where $\eta_0$ denotes the quadrotor position, and $R_o, R_1^b$ represents the respective rotation matrices.
To evaluate the proposed controller across diverse operational profiles, the system is tested on both a point-stabilization task and a trajectory-tracking task designed for multi-waypoint pick-and-place operations. For the latter, a smooth reference trajectory is required for the CoM of the system, denoted as $\eta^d_{qm}$, which must be synthesized from a desired end-effector target $\eta^d_{ee}$. This trajectory tracking is uniquely determined through the geometric allocation strategy detailed in Algorithm~\ref{alg:task_execution}, which yields:
\begin{equation}
\eta^d_{qm} = \eta^d_{ee} + \xi(m_0, m_1, \zeta) r_{ee}.
\end{equation}
By utilizing this task-space execution strategy to link the physical end-effector space with the virtual system coordinates, a series of comparative simulation scenarios is conducted to demonstrate the tracking precision and transient performance of the proposed framework. Importantly, the trajectory is designed to preserve an adequate safety margin between the ceiling and the target location, ensuring sufficient operational space for precise pick-and-place maneuvers.
\begin{algorithm}[t]
\caption{Desired $CoM$ of QRM system  for Desired End-Effector Position}
\label{alg:task_execution}
\begin{algorithmic}[1]
\REQUIRE Desired end-effector position 
         $\eta^d_{ee} = [a_0,\, b_0,\, c_0]^\top$
\REQUIRE Quadrotor mass $m_0$, manipulator mass $m_1$.
\ENSURE Desired quadrotor $CoM$ position $\eta^d_{0}$

\STATE Quadrotor desired attitude $R^d_{0}$
\STATE The RM desired rotation ${R^b_{1}}^{d}$
\STATE Position vector of the End-effector relative to the $\{\mathcal{F}^b_M\}$: $r_{ee} = [0\;\;0\;\;-l]^\top$
\STATE Position vector of the $CoM$ of RM relative to the $\{\mathcal{F}^b_M\}$: $r_{M} = [0\;\;0\;\;-l/2]^\top$
\STATE Compute QRM system $CoM$ due to quadrotor and RM 
\[\eta_{qm} =  \eta_{0} + R_0(\Phi) \eta^b_1 \left(\frac{m_1}{m_0 + m_1}\right)\]
\STATE Compute end effector position in inertial frame
\[\eta_{ee} = \eta_{0} + R_0(\Phi)R_z(\theta_1)R_y(\theta_2)r_{ee}\]
\STATE Compute desired QRM system position
       \[
       \eta^d_{qm} \gets \eta^d_{ee} + \xi(m_0,m_1,\zeta)r_{ee}
       \]
\RETURN $\eta_{qm_d}$
\end{algorithmic}
\end{algorithm}
\begin{figure}
    \centering
    \includegraphics[width=1\linewidth]{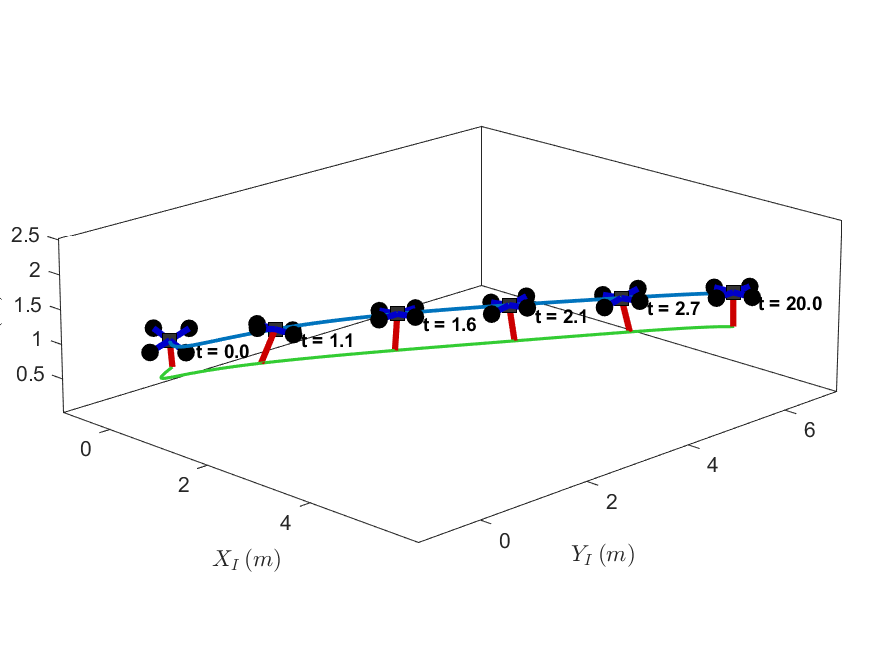}
    \caption{Snapshots of point stabilization results showing end-effector and quadrotor trajectories; the blue curve represents the quadrotor position $\eta_0$ while the green curve is for the RM end-effector position $\eta_{ee}$.}
    \label{animation_PS}
\end{figure}
\begin{figure}
    \centering
    \includegraphics[width=1\linewidth]{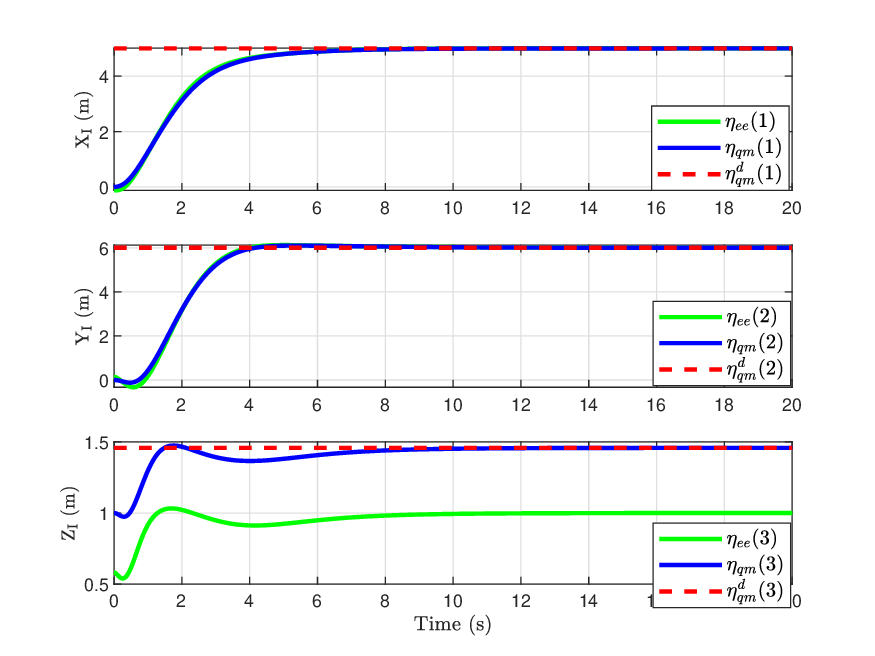}
    \caption{Position plots  of $\eta_{qm}$,  $\eta_{ee}$ along with $\eta^d_{qm}$.}
    \label{qm_ee_effe_pos}
\end{figure}
\begin{figure}
\centering
\includegraphics[width=1\linewidth]{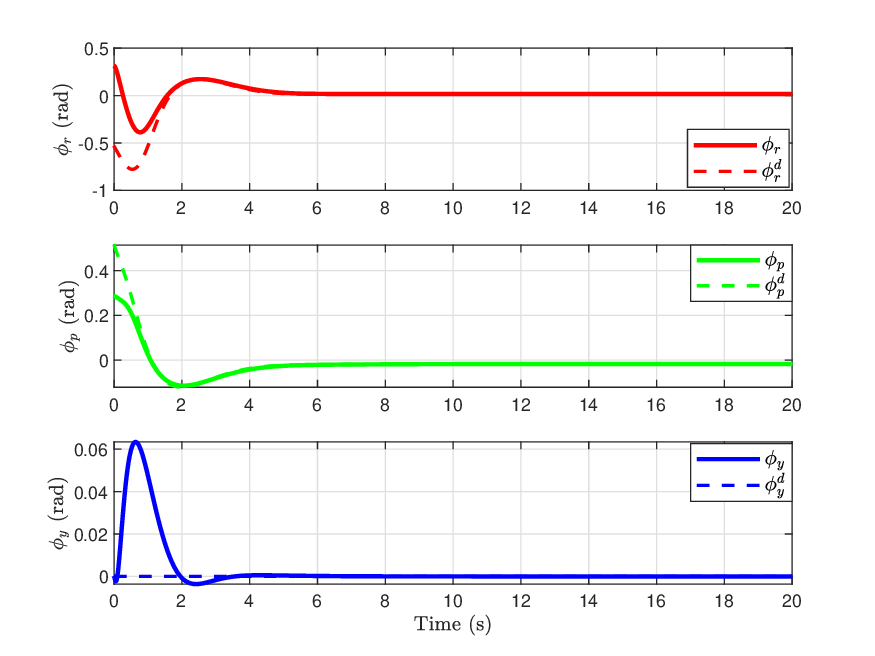}
\caption{Roll, pitch, and yaw response of the quadrotor during the end-effector point stabilization task.}
\label{QM_orien}
\end{figure}
\begin{figure}
\centering
\textbf{}\includegraphics[width=1\linewidth]{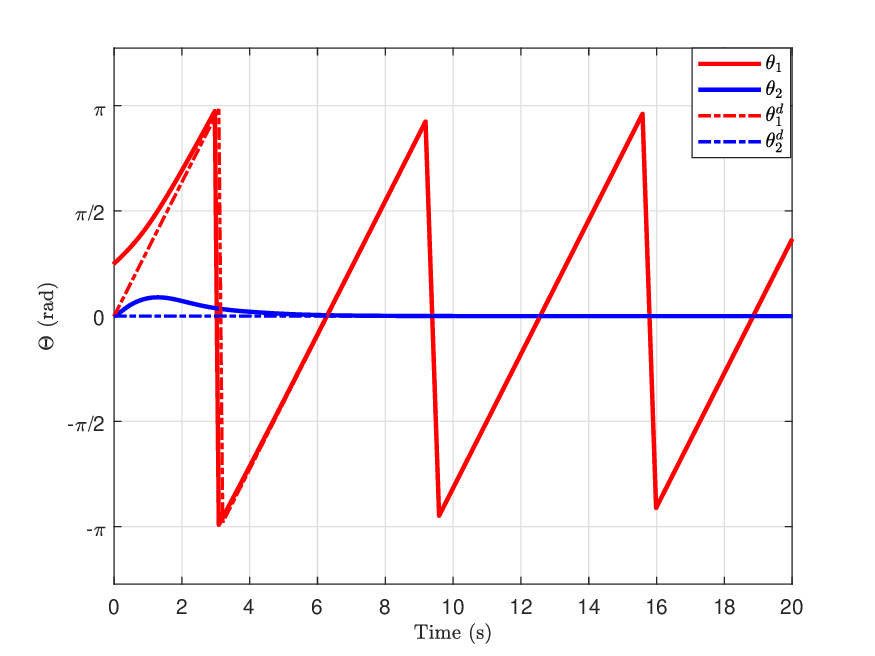}
\caption{Manipulator joint trajectories during the end-effector point stabilization task: (a) joint angle $\theta_1$ and (b) joint angle $\theta_2$.}
\label{manipulator_angles}
\end{figure}
\begin{figure}
    \centering
\includegraphics[width=1\linewidth]{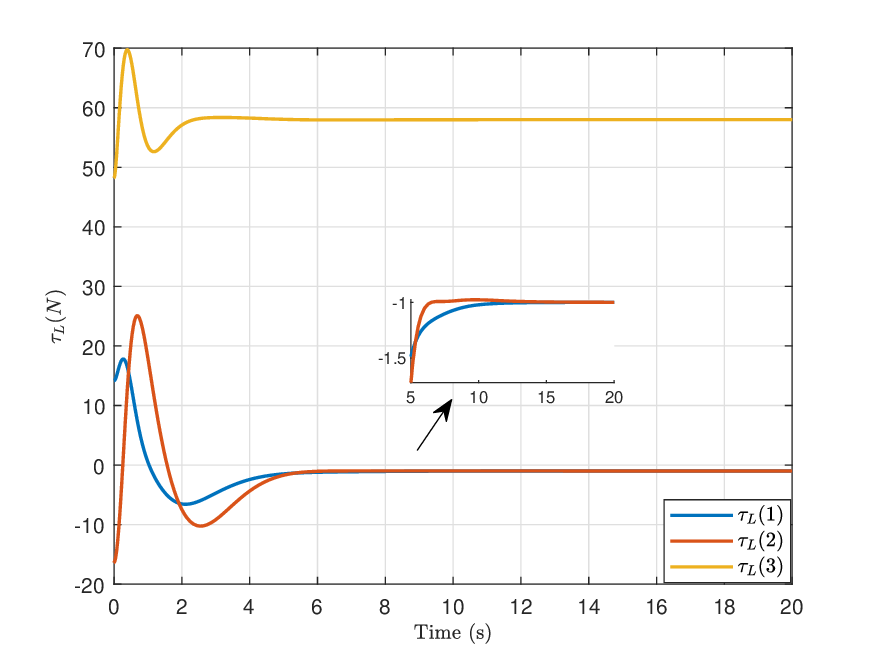}
    \caption{Locked control $\tau_L$ for end-effector point stabilization task}
    \label{Locked_con_ps}
\end{figure}
\begin{figure}
    \centering
\includegraphics[width=1\linewidth]{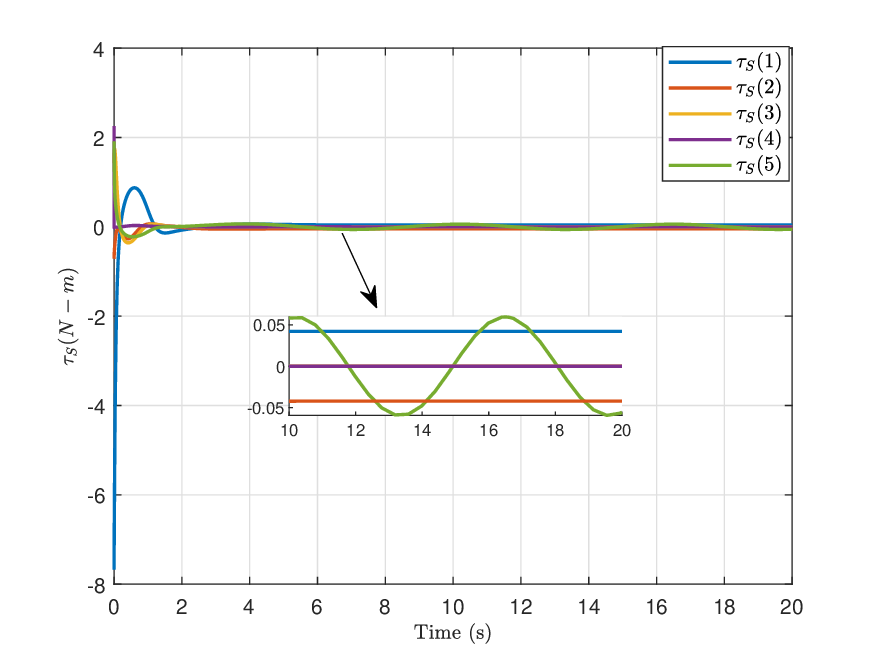}
    \caption{Shape control $\tau_S$ for end-effector Point stabilization task.}
    \label{shape_con_ps}
\end{figure}
\begin{figure}
    \centering
    \includegraphics[width=1\linewidth]{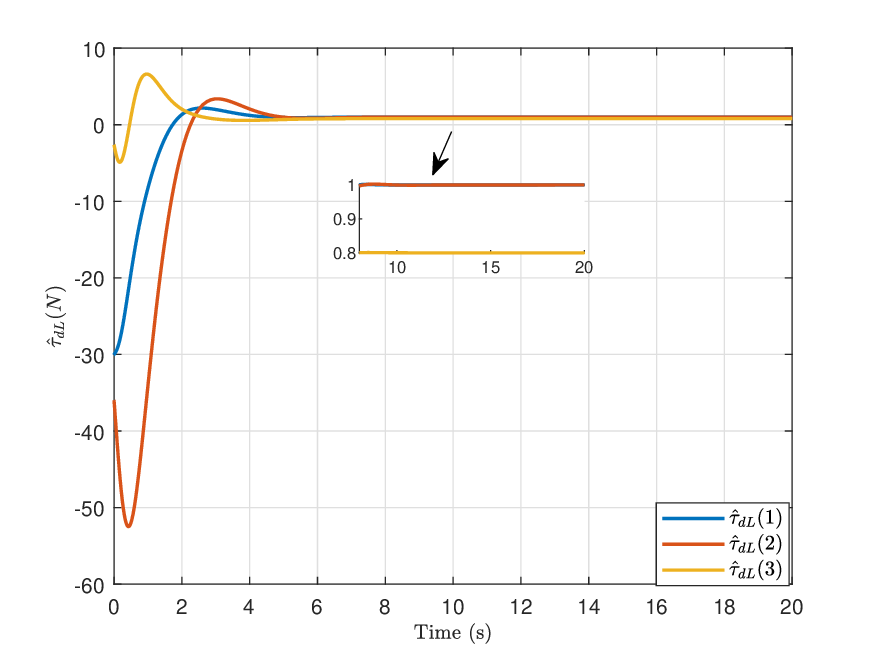}
    \caption{Estimated disturbance $\hat\tau_{dL}$ on locked system for end-effector point stabilization task.}
    \label{ob_ps_taudL_hat}
\end{figure}
\begin{figure}
    \centering
    \includegraphics[width=1\linewidth]{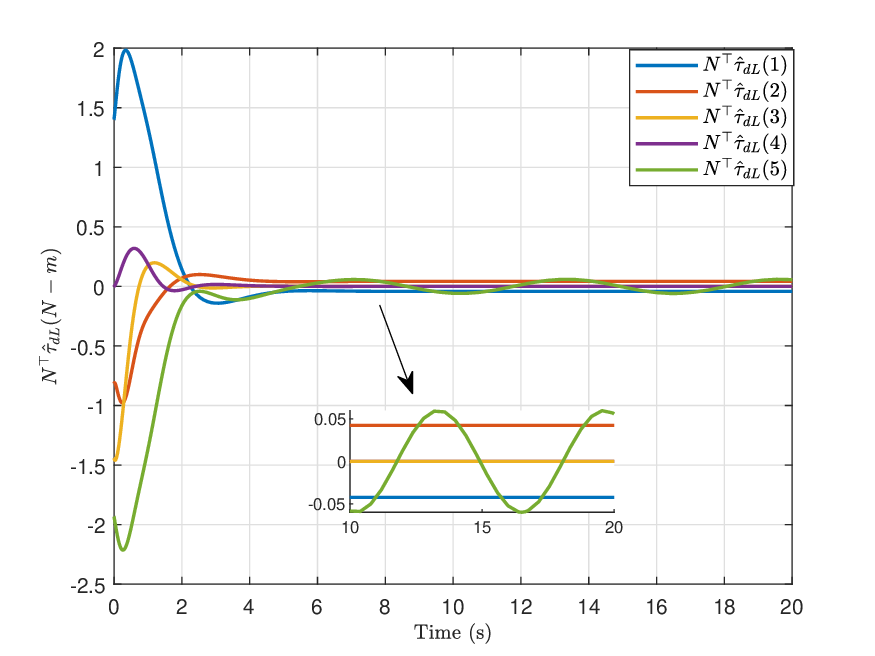}
    \caption{Estimated disturbance $N^\top \hat \tau_{dL}$ on shape system during Point stabilization task.}
    \label{ob_ps_N'tau_dL_tilde}
\end{figure}
\subsection{End-Effector Point Stabilization}
In the first task, the objective is to stabilize the end-effector at a fixed point in the inertial frame. The desired end-effector position $\eta_{ee}^d$ is selected as a constant setpoint. The QRM system is initialized away from equilibrium, and convergence of the end-effector position is achieved through the coordinated motion of the system's $CoM$, attitude, and RM joints.
The desired end-effector position is selected as
\[
\eta_{ee}^d =
\begin{bmatrix}
5 \;\; 6 \;\; 1
\end{bmatrix}^\top \text{ m},\] so from algorithm~\ref{alg:task_execution}, $\eta^d_{qm}$ is calculated, 
desired internal variables 
\[\Theta^d = [wt \;\; 0], \quad \Phi^d = [\phi_r^d \;\; \phi_p^d \;\; 0] rad,\] where $\phi_r^d$, $\phi_p^d$ are given by the equation \eqref{eq:roll desired}--\eqref{eq:pitch desired} and $w = 1$.
The system is initialized away from equilibrium with
\[\eta_{qm}(0) =
\begin{bmatrix}
0\;\; 0 \;\; 1
\end{bmatrix}^\top \text{m},
\quad
\Phi(0) = \left[\pi/10\;\; \pi/11\;\; 0\right]^\top rad,\]
\[\Theta(0) = \left[\pi/4\;\; 0\right]^\top rad, \quad \tau_{dL} = [1\;\;1\;\;0.8]^\top N.\]
Fig.~\ref{qm_ee_effe_pos} illustrates the end-effector trajectory during the point convergence of the end-effector, alongside the time histories of the system $CoM$ position, $\eta_{qm}$, compared against the desired reference trajectories, $\eta^d_{qm}$, for the $x$, $y$, and $z$ axes in $\mathcal{F}_I$. The close agreement between the actual and desired states confirms high-fidelity trajectory tracking. Notably, because the desired RM joint angles $\theta_2$ is set to zero, the RM remains vertically aligned at $90^\circ$ relative to the quadrotor’s body frame. Consequently, the $x$ and $y$ coordinates of the end-effector coincide with the horizontal components of $\eta_{qm}$. However, the end-effector's $z$-position remains lower than the CoM altitude because the manipulator is mounted beneath the quadrotor's airframe, introducing a vertical offset in the $Z$ direction.
The attitude response of the quadrotor during the maneuver is shown in Fig.~\ref{QM_orien}; roll, pitch, and yaw angles remain bounded and converge smoothly to their steady-state values within 4 s, ensuring proper thrust alignment throughout the stabilization task.
Furthermore, the evolution of the RM joint angles shown in Fig.~\ref{manipulator_angles} confirms that the joints converge to their desired position approximately at 6 s and remain steady. 
Furthermore, Fig. \ref{Locked_con_ps} and Fig. \ref{shape_con_ps} illustrate the performance of the locked (thrust) and shape (torque) controllers, respectively. These results correlate directly with the external disturbances shown in Figs. \ref{ob_ps_taudL_hat} and \ref{ob_ps_N'tau_dL_tilde}, confirming that the control torques effectively compensate for external perturbations. These results demonstrate control effectiveness during point stabilization relative to Earth's gravity, as indicated in the locked control plot. 
This demonstrates that the arm can be controlled precisely without causing the quadrotor to wobble or oscillate during the process.
\subsection{Simulation of Pick-and-Place Operation}
\label{sec:simulation}
This subsection presents a numerical simulation study to validate the effectiveness of the proposed waypoint-based trajectory tracking framework for a QRM system performing a pick-and-place operation. The objective is to demonstrate smooth and accurate navigation between predefined locations using a dynamically feasible reference trajectory. For such operations, the reference trajectory must be sufficiently smooth to avoid aggressive control inputs and actuator saturation; specifically, the continuity of position, velocity, and acceleration is required, while higher-order smoothness is desirable for system robustness.
Let ${\eta^d_{qm}}(t) \in \mathbb{R}^3$ denote the desired Cartesian position of the $CoM$ of QRM system. The mission is specified by a sequence of waypoints $\mathcal{WP} = \{ \mathbf{p}_0, \mathbf{p}_1, \ldots, \mathbf{p}_5 \}, \;\; \mathbf{p}_i \in \mathbb{R}^3$ with corresponding arrival times $\mathcal{T} = \{ t_0, t_1, \ldots, t_5 \}$. Consider $i = 0$ to $i = 5$, the trajectory between consecutive waypoints is generated independently over each time interval $[t_i, t_{i+1}]$ using a minimum-jerk optimal control criterion. Among all admissible trajectories connecting $\mathbf{p}_i$ and $\mathbf{p}_{i+1}$ over $[t_i, t_{i+1}]$, we select the one that minimizes the squared jerk:
\begin{equation} \label{eq:minjerk}
\min_{(t)} 
\int_{t_i}^{t_{i+1}} 
\left\| {\eta^d_{qm}}^{(3)}(t) \right\|^2 dt,
\end{equation}
where ${\eta^d_{qm}}^{(3)}(t)$ is the third order derivative of the desired $CoM$ $\eta_{qm}^d$
, subject to the boundary conditions
\begin{equation}\label{boundary conditions}
\left.
\begin{aligned}
\eta_{qm}^d(t_i) &= p_i, \quad \eta_{qm}^d(t_{i+1}) = p_{i+1}, \\
\dot{\eta}_{qm}^d(t_i) &= 0, \quad \dot{\eta}_{qm}^d(t_{i+1}) = 0, \\
\ddot{\eta}_{qm}^d(t_i) &= 0, \quad \ddot{\eta}_{qm}^d(t_{i+1}) = 0.
\end{aligned}
\right\}
\end{equation}
This cost function penalizes rapid changes in acceleration and results in smooth motion profiles that are well suited for quadrotor systems.\\
From the calculus of variations, using the Euler lagrange equation, the solution to this cost function \eqref{eq:minjerk} is a fifth-order polynomial in time, which is expressed as ${\eta^d_{qm}}(t) = \sum_{k=0}^{5} a_k t^k$. By using the boundary conditions given by \eqref{boundary conditions}, the optimal solution is written as ${\eta^d_{qm}}(t) = {p}_i + \left( {p}_{i+1} - {p}_i \right) s( \varsigma )$, where $s( \varsigma ) = 10 \varsigma ^3 - 15 \varsigma ^4 + 6 \varsigma ^5$ is the minimum-jerk blending function for a rest-to-rest objective, that is, fixed initial and final positions, and the normalized time $\varsigma$ = $\frac{t - t_i}{t_{i+1} - t_i}$. This function satisfies the six boundary conditions at the interval limits, ensuring that the trajectory and its first two derivatives remain continuous across all waypoint transitions.
The corresponding velocity, acceleration, and higher-order derivatives are derived by differentiating ${\eta^d_{qm}}(t)$ with respect to time using the chain rule $\frac{d}{dt} = \frac{1}{t_{i+1}-t_i} \frac{d}{d \varsigma }$, yielding:
\begin{equation}
\dot \eta^d_{qm}(t) = 
\frac{{p}_{i+1} - {p}_i}{t_{i+1}-t_i} \, \dot{s}( \varsigma ),
\end{equation}
doing recursively, the higher-order derivatives are computed as,
\begin{equation}\label{desired derivatives}
\left.
\begin{aligned}
\ddot\eta^d_{qm}(t) &=
\frac{{p}_{i+1} - {p}_i}{(t_{i+1}-t_i)^2} \, \ddot{s}( \varsigma ), \\
{\eta^d_{qm}}^{(3)}(t) &=
\frac{{p}_{i+1} - {p}_i}{(t_{i+1}-t_i)^3} \, {s}^{(3)}( \varsigma ), \\
{\eta^d_{qm}}^{(4)}(t) &=
\frac{{p}_{i+1} - {p}_i}{(t_{i+1}-t_i)^4} \, {s}^{(4)}( \varsigma ).
\end{aligned}
\right\}
\end{equation}
The derivatives of the blending function or path parameterization are explicitly given by
\begin{equation}\label{path parameterization derivatives}
\left.
\begin{aligned}
\dot{s}( \varsigma ) &= 30 \varsigma ^2 - 60 \varsigma ^3 + 30 \varsigma ^4, \\
\ddot{s}( \varsigma ) &= 60 \varsigma  - 180 \varsigma ^2 + 120 \varsigma ^3, \\
{s}^{(3)}( \varsigma ) &= 60 - 360 \varsigma  + 360 \varsigma ^2, \\
{s}^{(4)}( \varsigma ) &= -360 + 720 \varsigma
\end{aligned}
\right\}.
\end{equation}
The reference signals $\eta^d_{qm}$, \eqref{desired derivatives}, and \eqref{path parameterization derivatives} are used directly for feedforward and feedback control of the quadrotor.
With this trajectory generation setup, simulations will be performed and validated for pick and place payload delivery and transportation applications. The QRM system is initialized with
$\eta_{qm}(0) =
\begin{bmatrix}
2 & 1 & 2
\end{bmatrix}^\top \text{m}$,
\;\;
$\Phi(0) = \left[0\;\; 0\;\; {\pi}/{6}\right]^\top rad$,
$
\Theta(0) = [5\pi/6 \;\; {\pi}/{3}]^\top rad $
while all initial internal configuration velocities $\dot \zeta = [\dot \Phi \;\; \dot \Theta]^\top$ are set to zero. The system is subjected to constant external disturbance $\tau_{dL} = [2 \;\; 2 \;\;1.8]^\top N$.
The flight profile consists of sequential way-point tracking across the take-off, pick, transport, drop, and landing phases. The trajectory coordinates are defined by \([0\;\;0\;\;1]^\top\), \([0\;\;0\;\;3]^\top\), \([5\;\;0\;\;1]^\top\), and a final hovering (landing) position of \( [3\;\;3\;\;1]^\top \text{m}\).
Each segment is generated independently using the proposed minimum-jerk formulation, ensuring smooth transitions and dynamically feasible motion throughout the mission.
\begin{figure}[!htbp]
    \centering
\includegraphics[width=1\linewidth]{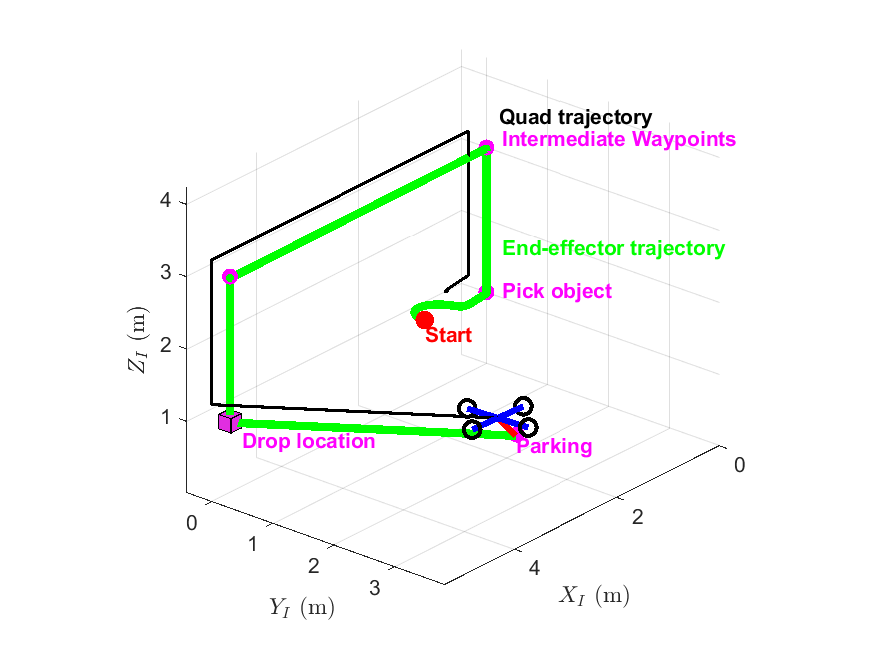}
    \caption{Pick-and-place operation in 3D space, where magenta markers (balls) denote the reference way-points, red marker is the starting/initial position, and square-box represent object, while solid black and solid green speaks for quadrotor path and end end-effector path respectively.}
    \label{pick and place in 3d}
\end{figure}
\begin{figure}[!htbp]
\centering
\includegraphics[width=1\linewidth]{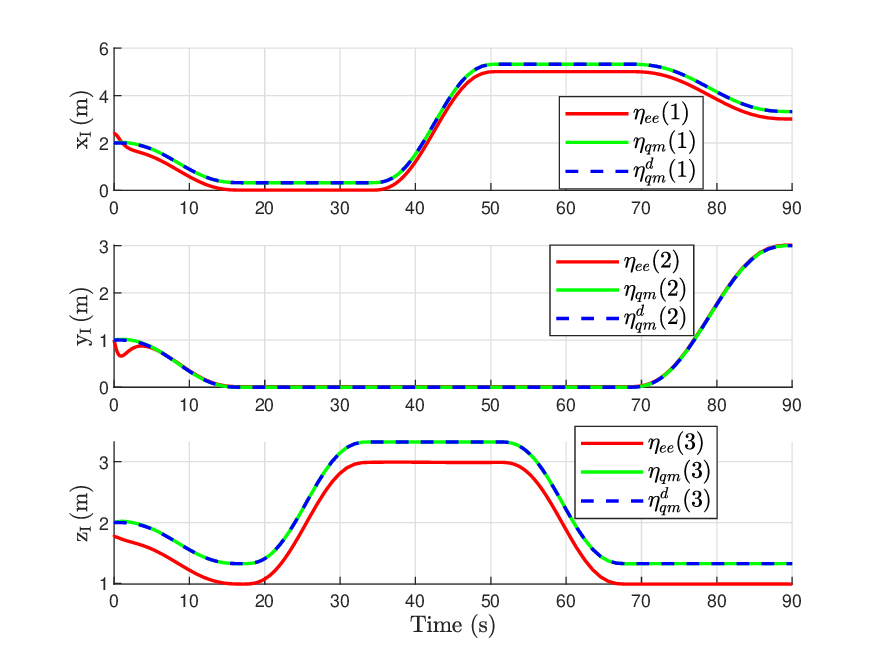} 
\caption{End effector and QRM CoM position in the inertial frame during the pick and place way-point tracking operation.}
\label{end_eff_pos_wp}
\end{figure}
\begin{figure}[!htbp]
\centering
\includegraphics[width=1\linewidth]{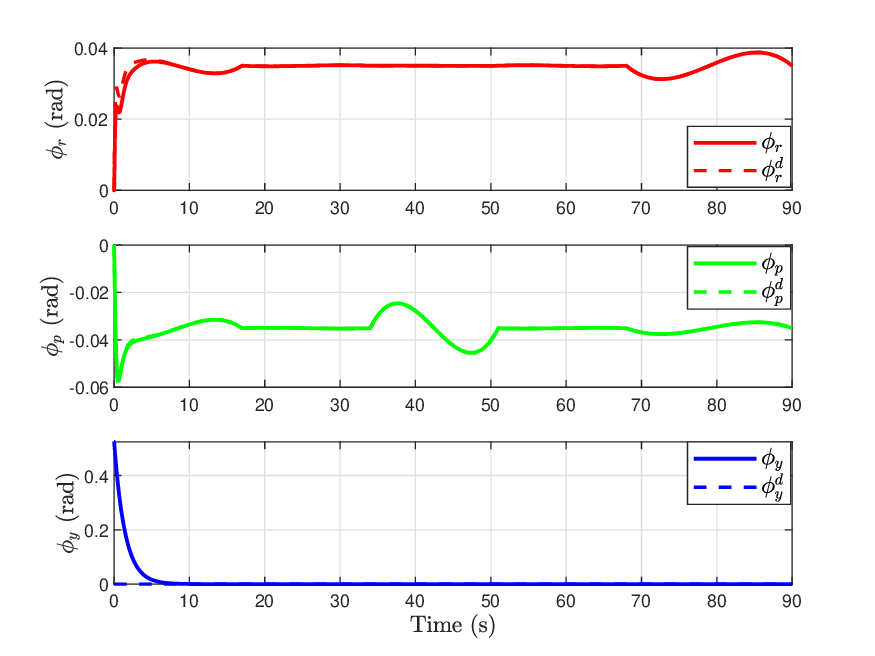}
\caption{Euler angles response of the quadrotor during the pick and place way-point tracking operation.}
\label{fig:QM_orien_TT}
\end{figure}
\begin{figure}[!htbp]
    \centering
\includegraphics[width=1\linewidth]{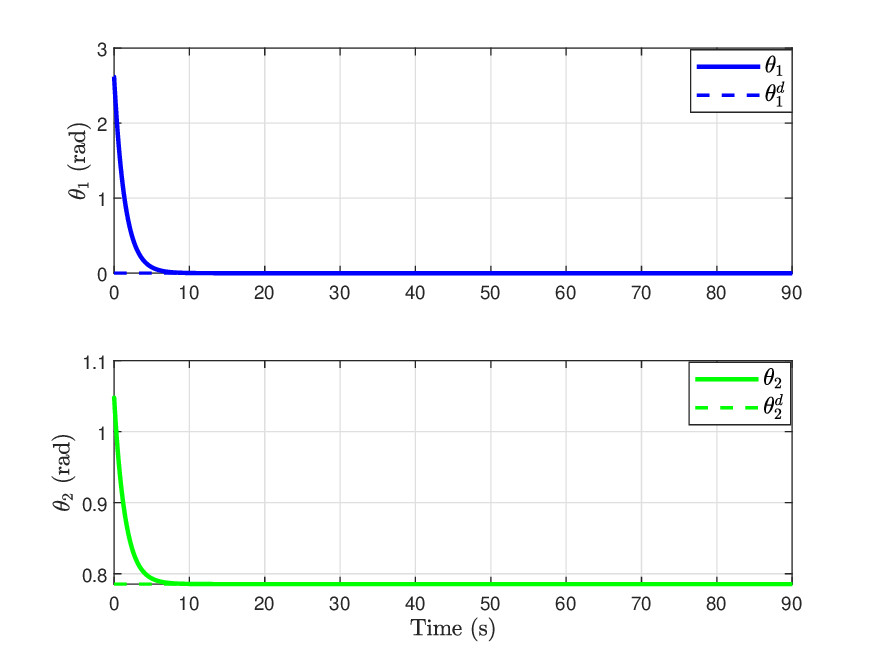}
    \caption{RM joint angles follows the desired set-points during the pick and place way-point tracking operation.}
    \label{ob_pp_JA}
\end{figure}
\begin{figure}[!htbp]
    \centering
    \includegraphics[width=1\linewidth]{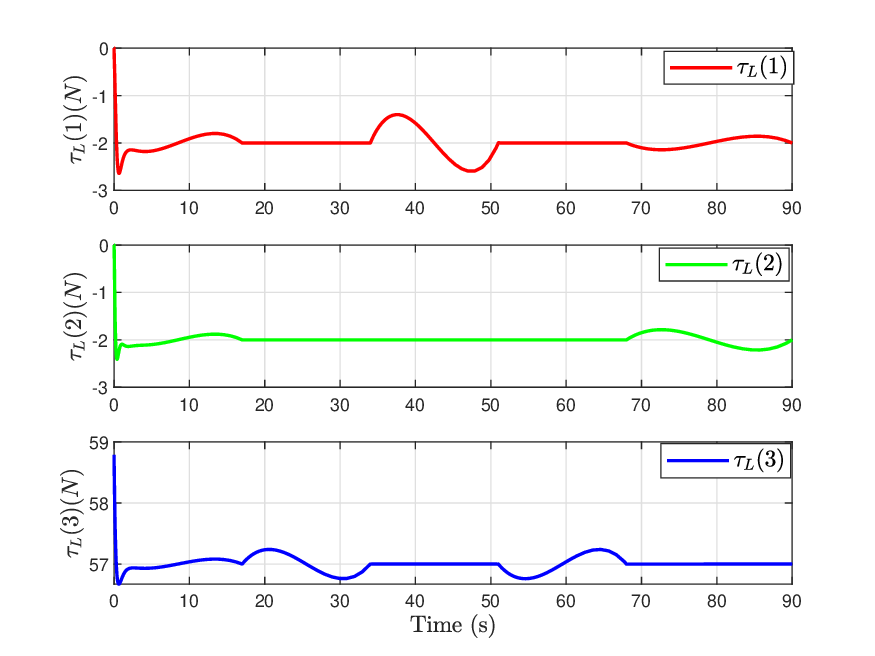}
    \caption{Locked control or thrust vector of the quadrotor during the pick and place task.}
    \label{Locked_control_PP}
\end{figure}
\begin{figure}[h]
    \centering
    \includegraphics[width=1\linewidth]{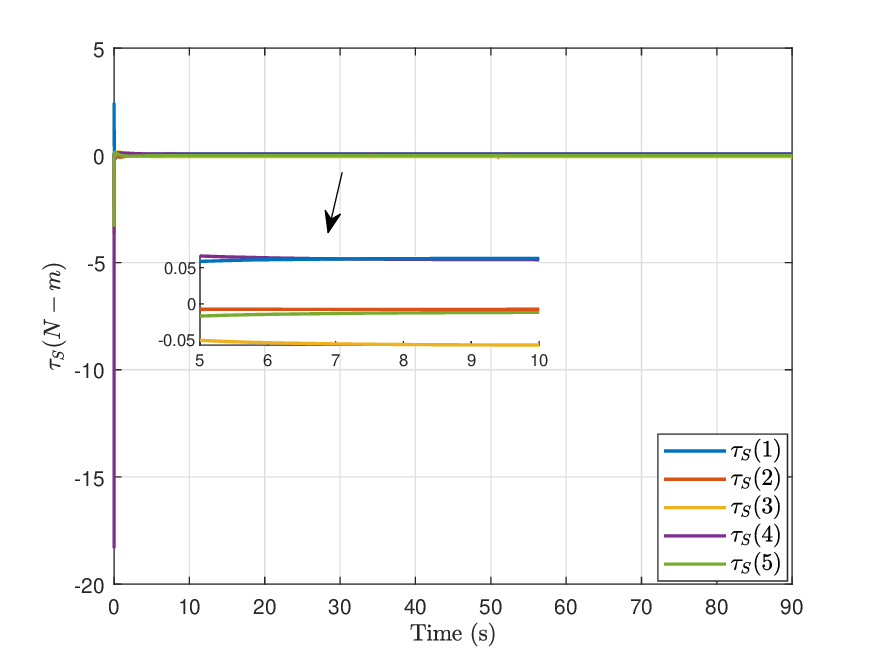}
    \caption{Shape control(combination of $\tau_L,u_\Phi,u_\theta$) during the pick and place way-point tracking operation.}
    \label{Shape_control_PP}
\end{figure} 
The three-dimensional flight path of the QRM system for the pick and place task is illustrated in Fig.~\ref{pick and place in 3d}, where the RM picks the object from $[0\; 0 \; 1]^\top$ and drops it to $[5 ~ \; 0 ~ \; 1]^\top$. 
Fig~\ref{end_eff_pos_wp} represents the results of the end-effector trajectory during the pick-and-place task, along with the time histories of the system $CoM$ position $\eta_{qm}$ components compared with the desired reference trajectories  $\eta^d_{qm}$ along the $x$, $y$, and $z$ axes, respectively. 
The close agreement between the actual and desired positions confirms accurate trajectory tracking. 
Fig. \ref{fig:QM_orien_TT} depicts the smooth transition of the attitude variable $\Phi$ of the QRM system for the pick and place task. It is noted that the end effector reaches to pick the object from the initial condition with a small initial transient. It then follows a straight line with trajectory generation. It is evident that when the system goes from $[0 \;\; 0 \;\; 3]^\top$ to $[5 \;\; 0 \;\; 3]^\top$, there is only pitch while maintaining a constant roll ($\phi_r$). Lastly, both $\phi_r$ and $\phi_p$ are involved while yaw is zero.  
Moreover, Fig.~\ref{ob_pp_JA} confirms that the joints converge to their desired position smoothly and remain steady.
Lastly, Fig.\ref{Locked_control_PP} and Fig.\ref{Shape_control_PP} display the locked and shape control inputs, respectively. As shown, these signals remain continuous throughout the operation and do not converge to zero because of the disturbance; these controller values can be justified from the estimated disturbance plots Fig.\ref{ob_pp_taudL_hat} and Fig.\ref{ob_pp_NTtaudL_hat}. Notably, a significant increase in the locked control input is observed during motion along the x and y axes, reflecting the actuation effort required for lateral translation. 
\begin{figure}[!t]
    \centering
    \includegraphics[width=1\linewidth]{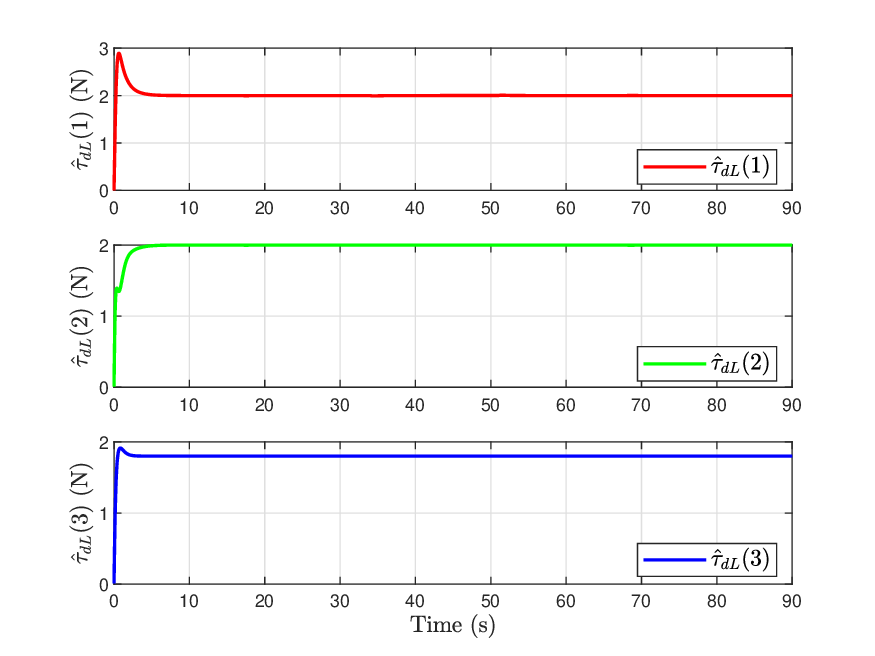}
    \caption{Estimated disturbance on locked system during the pick and place way-point tracking operation.}
    \label{ob_pp_taudL_hat}
\end{figure} 
\begin{figure}[!t]
    \centering
    \includegraphics[width=1\linewidth]{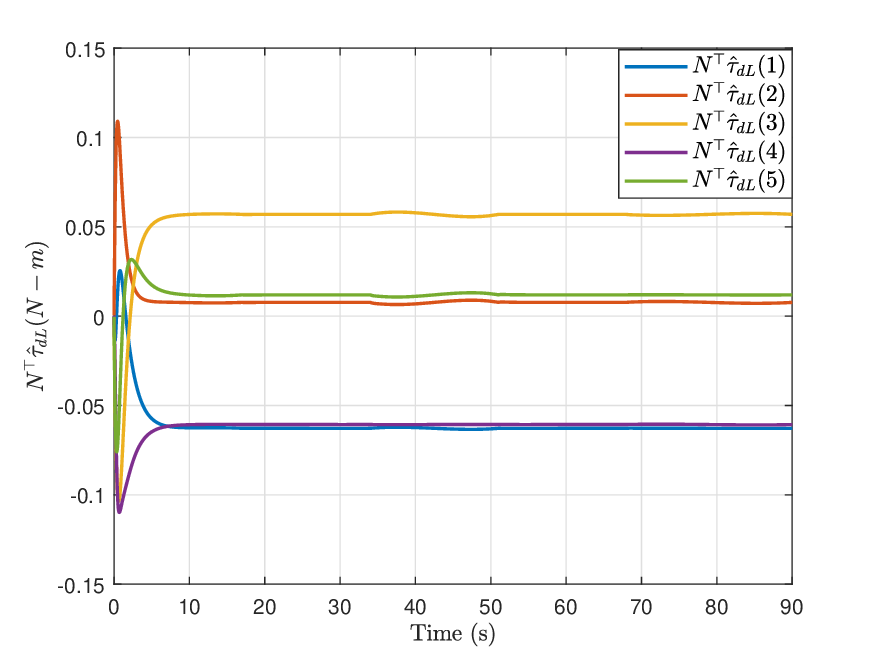}
    \caption{Estimated disturbance on shape system during the pick and place way-point tracking operation.}
    \label{ob_pp_NTtaudL_hat}
\end{figure}
\subsection{Discussion}\label{subsec:discussion}
The simulation results demonstrate that the proposed approach is effectively structured to handle the multi-phase complexity of aerial manipulation. First, the system achieves precise end-effector point stabilization even when a time-varying trajectory is intentionally applied to the first joint of the robotic arm about the $z$-axis, dynamically proving that the quadrotor's motion and the internal arm adjustments are successfully decoupled. Consequently, this decoupled control framework maintains robust trajectory tracking of the system’s overall $CoM$ while simultaneously regulating the desired remaining arm angles, thereby ensuring that the combined interactive dynamics of the drone and arm do not destabilize the primary flight path. These capabilities ultimately culminate in a highly coordinated, seamless pick-and-place sequence, where the mathematical smoothness of the planned quintic trajectory prevents aggressive control saturation and guarantees steady, predictable transitions during object engagement.
These results validate the suitability of the proposed controller and trajectory tracking framework for aerial manipulation tasks and provide a foundation for experimental implementation.
\section{Conclusion}
In conclusion, this research successfully validates an integrated control and trajectory tracking framework designed for the unique demands of aerial manipulation. By prioritizing high-precision point stabilization alongside $CoM$ trajectory tracking, the system effectively manages the shifting mass distribution and aerodynamic disturbances inherent in robotic arm operations. The simulation outcomes confirm that the quintic trajectory approach minimizes mechanical stress and control effort, ensuring stability during the critical transition from free flight to load-bearing manipulation. Ultimately, this framework provides a robust and scalable foundation for the transition from theoretical modeling to real-world experimental implementation in complex pick-and-place environments.
\subsection*{Acknowledgments}
This work is supported by the Science and Engineering Research Board (SERB) of the Department of Science and Technology (DST), India under the research grant IIT Palakkad Technology IHub Foundation Technology Development Grant IPTIF/TD/IP/005.
\subsection*{Conflicts of Interest}
The authors declare no conflicts of interest.
\appendix 
\section{Appendix}\label{append}
\subsection{Rotation matrix}
Let $s_i$ and $c_i$ denote $\sin(i)$ and $\cos(i)$, respectively. For composite rotations, $c_{yp}$ represents $\cos(\phi_y)\cos(\phi_p)$, and so on.
This is the standard $ZYX$ (Yaw-Pitch-Roll) Euler transformation.
\begin{equation}
   R_0 = \begin{bmatrix}
       c_y c_p & c_y s_p s_r - s_y c_r & c_y s_p c_r + s_y s_r \\ s_y c_p & s_y s_p s_r + c_y c_r & s_y s_p c_r - c_y s_r \\-s_p & c_p s_r & c_p c_r
   \end{bmatrix}
\end{equation}
Since $R_1 = R_0(\Phi) R_z(\theta_1) R_y(\theta_2)$, it is often clearer to define the relative rotation of the RM in ${\mathcal{F}^b_Q}$,  $R^b_{1} = R_z(\theta_1) R_y(\theta_2)$ first, and then provide the multiplied result. Relative Arm Rotation in the $\mathcal{F}^b_Q$:$$R^b_1 = \begin{bmatrix}c_{\theta_1}c_{\theta_2} & -s_{\theta_1} & c_{\theta_1}s_{\theta_2} \\ s_{\theta_1}c_{\theta_2} & c_{\theta_1} & s_{\theta_1}s_{\theta_2} \\ -s_{\theta_2} & 0 & c_{\theta_2}\end{bmatrix}.$$
\subsection{Calculation of \texorpdfstring{$\bar M$}{M-bar}:}\label{bar M}
Here, the rigorous calculation of the mass matrix of the decoupled system \eqref{new dynamics} is demonstrated, where translation and orientation terms are purely separated.
\begin{align*}
   & \mathcal{B}^\top M(q) \mathcal{B} =\begin{bmatrix}
    I_3 & 0  \\ 
    N(\zeta)^\top &   I_5
\end{bmatrix}\begin{bmatrix}
        \bar{M}_{11} &  \bar{M}_{12}\\
        {\bar{M}_{12}}^\top & \bar{M}_{22}
    \end{bmatrix}\begin{bmatrix}
    I_3 & N(\zeta)  \\ 
    0 &   I_5
\end{bmatrix} 
\\
&= \begin{bmatrix}
    m_{qm}I_3 & \bar M_{12} \\
    m_{qm}N^\top(\zeta)+\bar M_{12}^\top & N^\top(\zeta)\bar M_{12} + \bar M_{22}
\end{bmatrix}\\& \quad \hspace{4cm}\;\;\begin{bmatrix}
    I_3 & N(\zeta)  \\ 
    0 &   I_5
\end{bmatrix} \\
&=\begin{bmatrix}
    m_{qm}I_3 & m_{qm}N(\zeta)+\bar M_{12} \\
    m_{qm}N^\top(\zeta)+\bar M_{12}^\top  & 
    \begin{aligned}
        &\bigl(m_{qm}N^\top(\zeta)+\bar M_{12}^\top \bigr)N(\zeta) \\
        &\;\; + N^\top(\zeta)\bar M_{12} + \bar M_{22}
    \end{aligned}
\end{bmatrix}
\end{align*}
Now, substituting $N(\zeta)$ from \eqref{N and M12} we get,
\begin{align*}
    \mathcal{B}^\top \bar M \mathcal{B} &=\begin{bmatrix}
    I_3 & 0  \\ 
    0 &   N^\top(\zeta)\bar M_{12} + \bar M_{22}
\end{bmatrix}.
\end{align*}
\subsection{Calculation of \texorpdfstring{$\bar G$}{G-bar}} \label{bar G}
The Gravitational vector of the decoupled dynamics \eqref{new dynamics} is computed as follows:
\begin{align}
    \bar G = \mathcal{B}^\top G(q) & = \begin{bmatrix}
    I_3 & 0  \\ 
    N(\zeta)^\top &   I_5
\end{bmatrix}
\begin{bmatrix}
m_{qm}g\mathbf{e}_3 \\m_1 g\Bigr[\frac{\partial\langle\eta_1,\mathbf{e}_3\rangle}{\partial \zeta}\Bigl]^\top
\end{bmatrix}, \nonumber \\
 & = \begin{bmatrix}
    m_{qm} g \mathbf{e}_3\\m_{qm}N^\top \mathbf{e}_3 +m_1 g \Bigr[\frac{\partial\langle\eta_1,\mathbf{e}_3\rangle}{\partial \zeta}\Bigl]^\top
\end{bmatrix}, \label{G_bar column}
\end{align}
to calculate the second term of second row of \eqref{G_bar column}, we re-write equation \eqref{manipulator translation} as:
\begin{align*}
\dot \eta_1 &= \dot \eta_0 + \begin{bmatrix}
        - \widehat{R_0\eta^b_1}T & R_0J_{v1}
    \end{bmatrix} \dot \zeta
\end{align*}
Define $J_{\zeta} =\dot \eta_0 + \begin{bmatrix}
        - \widehat{R_0\eta^b_1}T & R_0J_{v1}
    \end{bmatrix} $
; this yields:
\begin{align} \label{eta_1 dot}
\dot \eta_1 &= \dot \eta_0 + J_{\zeta} \dot \zeta
\end{align}
we have $\eta_1 = \eta_1(\eta_0,\zeta)$; then the total partial derivative is as follows:
\begin{align} \label{total partial derivative}
    \dot \eta_1 = \frac{\partial \eta_1}{\partial \eta_0}\dot \eta_0+\frac{\partial \eta_1}{\partial \zeta}\dot \zeta
\end{align}
Now, comparing \eqref{eta_1 dot} and \eqref{total partial derivative}, this renders
\[\frac{\partial \eta_1}{\partial \zeta} = J_\zeta,\] this can be further expressed as:
\begin{align*}
\mathbf{e}_3^\top\frac{\partial \eta_1}{\partial \zeta} &= \mathbf{e}_3^\top J_\zeta\\
    \frac{\partial \langle\eta_1,\mathbf{e}_3\rangle}{\partial \zeta} &= \mathbf{e}_3^\top\begin{bmatrix}
        - \widehat{R_0\eta^b_1}T & R_0J_{v1}
    \end{bmatrix} \\
    m_1\frac{\partial \langle\eta_1,\mathbf{e}_3\rangle}{\partial \zeta} &= \mathbf{e}_3^\top\begin{bmatrix}
        - m_1\widehat{R_0\eta^b_1}T & m_1R_0J_{v1}
    \end{bmatrix}\\ 
    &= \mathbf{e}_3^\top\begin{bmatrix}
        M_{12} & M_{13}
    \end{bmatrix},
\end{align*}
Therefore, from \eqref{eq: block mass matrix}, the above expression can be seen as
\begin{align} \label{m1*partial eta_1}
    m_1\frac{\partial \langle\eta_1,\mathbf{e}_3\rangle}{\partial \zeta} &= \Bigl(\tilde M_{12}^\top \mathbf{e}_3\Bigr)^\top,
\end{align}
Finally, substituting \eqref{N and M12} and \eqref{m1*partial eta_1} in \eqref{G_bar column} gives:
\begin{align*}
    \bar G = \begin{bmatrix}
        m_{qm}g\mathbf{e}_3 \\ \mathbf{0}_{5\times1}
    \end{bmatrix}.
\end{align*}
\subsection{Skew symmetry of \texorpdfstring{$\dot M_S - 2C_S$}{dot M S minus 2 C S}} \label{skew symmetric of dotM_s-2C_s}
Given the internal rotational dynamics of a QRM system as described by \eqref{eq:shape}:
\[
M_S(\zeta){\ddot \zeta} + C_S(\zeta,\dot{\zeta})\dot{\zeta} = \tau_S
\]
where $M_S(\zeta)$ is the symmetric, positive-definite inertia matrix, and the matrix $\dot{M}_S - 2C_S$ is skew-symmetric if $C_S$ is defined using Christoffel symbols of the first kind.
From the definition of the Coriolis Matrix, the elements $C_{ij}$ of the Coriolis matrix $C_S$ are defined by the configuration vector $\zeta$ and the velocity $\dot{\zeta}$:
\[
C_{ij} = \sum_{k=1}^{5} \Gamma_{ijk} \dot{\zeta}_k
\]
where the Christoffel symbols of the first kind, $\Gamma_{ijk}$, are derived from the partial derivatives of the inertia matrix elements:
\[
\Gamma_{ijk} = \frac{1}{2} \left( \frac{\partial M_{S_{ij}}}{\partial \zeta_k} + \frac{\partial M_{S_{ik}}}{\partial \zeta_j} - \frac{\partial M_{S_{kj}}}{\partial \zeta_i} \right)
\]
Now applying the chain rule, the time derivative of the $(i,j)$-th element of the inertia matrix $M_S$ is expressed as:
\[
\dot{M}_{S_{ij}} = \sum_{k=1}^{5} \frac{\partial M_{S_{ij}}}{\partial \zeta_k} \dot{\zeta}_k
\]
Let $n = \dot{M}_S - 2C_S$, thus
$n_{ij}$ denote the $(i,j)$-th element of the matrix $n$. Substituting the expressions from the previous steps:
\begin{equation*}
\begin{aligned}
    n_{ij} =& \dot{M}_{S_{ij}} - 2C_{ij},\\
    =&\sum_k \frac{\partial M_{S_{ij}}}{\partial \zeta_k} \dot{\zeta}_k - 2 \sum_k \frac{1}{2} \left( \frac{\partial M_{S_{ij}}}{\partial \zeta_k} + \frac{\partial M_{S_{ik}}}{\partial \zeta_j} - \frac{\partial M_{S_{kj}}}{\partial \zeta_i} \right) \dot{\zeta}_k
    ,\\
   =& \sum_k \left[ \frac{\partial M_{S_{ij}}}{\partial \zeta_k} - \left( \frac{\partial M_{S_{ij}}}{\partial \zeta_k} + \frac{\partial M_{S_{ik}}}{\partial \zeta_j} - \frac{\partial M_{S_{kj}}}{\partial \zeta_i} \right) \right] \dot{\zeta}_k\\
   =& \sum_k \left( \frac{\partial M_{S_{kj}}}{\partial \zeta_i} - \frac{\partial M_{S_{ik}}}{\partial \zeta_j} \right) \dot{\zeta}_k
\end{aligned}
\end{equation*}
using the symmetry property of the inertia matrix ($M_{S_{ki}} = M_{S_{ik}}$ and $M_{S_{jk}} = M_{S_{kj}}$):
\[
n_{ji} = \sum_k \left( \frac{\partial M_{S_{ik}}}{\partial \zeta_j} - \frac{\partial M_{S_{kj}}}{\partial \zeta_i} \right) \dot{\zeta}_k
\]
henceforth, it is evident that:
\[n_{ij} = -n_{ji}\]
Since the transpose of the matrix equals its negative ($n^\top = -n$), the matrix $\dot{M}_S - 2C_S$ is skew-symmetric. 
\subsection{Mathematical interpretation of thrust}
\begin{corollary} \label{tauL = uct}
Given the attitude control \eqref{eq:ps_shape_ctrl}, which ensures $R_0 \to R^d_{0}$, where $R^d_{0}$ is the computed rotation matrix, and from the construction of $R^d_{0}$, we have
\begin{equation}
R^d_{0} e_3 = \frac{u_{ct}}{\|u_{ct}\|},
\end{equation}
then it follows that $\tau_L = u_{ct}$.
\end{corollary}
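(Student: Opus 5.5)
The plan is to read $\tau_L$ as the thrust vector actually delivered, $\tau_L = u_t R_0\mathbf{e}_3$, and to show that this coincides with $u_{ct}$ once $R_0$ has reached $R_0^d$ and $u_t$ is chosen appropriately.

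\textbf{Choice of the scalar thrust.} I fix the total thrust as the projection of the virtual force onto the current body $z$-axis:
\begin{equation}
u_t := \langle u_{ct}, R_0\mathbf{e}_3\rangle .
\end{equation}
This is the standard choice, consistent with the time-scale separation assumption stated before the Theorem. With it, the locked input becomes $\tau_L = (R_0\mathbf{e}_3)(R_0\mathbf{e}_3)^\top u_{ct}$, i.e. the orthogonal projection of $u_{ct}$ onto the line spanned by the thrust direction.

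\textbf{Limit of the projection.} By hypothesis, the attitude part of \eqref{eq:ps_shape_ctrl} drives $R_0 \to R_0^d$. The Theorem's exponential convergence of $e_{y_1}$ supplies this, since $\Phi^d$ enters $y_{1d}$. Because $R_0\mathbf{e}_3$ depends continuously on $R_0$, we get $R_0\mathbf{e}_3 \to R_0^d\mathbf{e}_3 = u_{ct}/\|u_{ct}\|$. Substituting into the projection gives
\begin{equation}
\tau_L \to \frac{u_{ct}\,u_{ct}^\top}{\|u_{ct}\|^2}\,u_{ct} = u_{ct}.
\end{equation}
Equivalently, the mismatch satisfies $\|\tau_L - u_{ct}\| \le \|u_{ct}\|\,\big(1-\langle R_0\mathbf{e}_3, R_0^d\mathbf{e}_3\rangle^2\big)^{1/2}$, which vanishes at the same exponential rate as the attitude error.

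\textbf{Well-posedness of $R_0^d$.} Next I check that $R_0^d$ is well defined and actually satisfies $R_0^d\mathbf{e}_3 = u_{ct}/\|u_{ct}\|$. Using the explicit $ZYX$ matrix from the appendix, the third column of $R_0^d$ is
\begin{equation}
R_0^d\mathbf{e}_3 = \begin{bmatrix} c_y s_p c_r + s_y s_r\\ s_y s_p c_r - c_y s_r\\ c_p c_r\end{bmatrix}.
\end{equation}
I will verify that the formulas \eqref{eq:pitch desired}--\eqref{eq:roll desired} make this vector parallel to $u_{ct}$. Rotating by the yaw to undo $\phi_y^d$ reduces the check to inverting the $\tan^{-1}$ relations. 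This requires $\langle u_{ct},\mathbf{e}_3\rangle > 0$, which holds near hover since $G_L$ dominates $u_{ct}$. It also requires $u_{ct}\neq 0$, so that the normalization makes sense.

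\textbf{Main obstacle.} The hard step is the circularity. The Theorem's analysis already assumes $\tau_L = u_{ct}$, while this corollary uses attitude convergence, which the Theorem provides. In the main argument, therefore, I would state the identity only asymptotically. I would then argue by a cascade: treat the residual $\tau_L - u_{ct}$ as an exponentially decaying input in $\dot S_l$ and absorb it with Young's inequality into the composite $\dot V$. This would preserve convergence, though the rigorous bound on $\dot\Phi^d$, which needs $x_{1d}\in\mathcal{C}^4$, is where most of the work lies.
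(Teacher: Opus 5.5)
Your proposal is correct and follows the same route as the paper. The paper's proof also takes $\tau_L$ to be the projection $\langle u_{ct}, R_0^d\mathbf{e}_3\rangle R_0^d\mathbf{e}_3$ of $u_{ct}$ onto the thrust axis, then substitutes $R_0^d\mathbf{e}_3 = u_{ct}/\|u_{ct}\|$ to obtain $(u_{ct}^\top u_{ct})\,u_{ct}/\|u_{ct}\|^2 = u_{ct}$. Your version is more careful on points the paper leaves implicit or unaddressed: you make the thrust choice $u_t = \langle u_{ct}, R_0\mathbf{e}_3\rangle$ explicit, you state the identity only in the limit $R_0 \to R_0^d$ with an explicit mismatch bound, and you flag the circularity between this corollary and the Theorem.
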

%
\begin{proof}
The stabilizing thrust force for locked subsystem dynamics \eqref{eq:locked_ss}, is used to define the reference of the vehicle's $z$-axis (denoted by $\mathbf{r}_{3}^{d}$) as
\begin{equation}
    \mathbf{r}_{3}^{d} = \frac{u_{ct}}{\|u_{ct}\|}.
\end{equation}
Note that $\mathbf{r}_{3}^{d}$ is nothing but the third column of $R^d_{0}$. Under control moments from \eqref{eq:ps_shape_ctrl}, which ensures $R_0 \to R^d_{0}$. Hence, we can write
\begin{equation}
\tau_{L} = \langle u_{ct}, R^d_{0} e_3 \rangle R^d_{0} e_3.
\end{equation}
This can be rewritten as
\begin{equation}
\tau_{L} = (R^d_{0} e_3)^\top u_{ct} \, (R^d_{0} e_3).
\end{equation}
Now substitute $R^d_{0} e_3 = \frac{u_{ct}}{\|u_{ct}\|}$ into the above expression:
\begin{equation}
\tau_L = \left(\frac{u_{ct}}{\|u_{ct}\|}\right)^\top u_{ct} \left(\frac{u_{ct}}{\|u_{ct}\|}\right).
\end{equation}
This gives
\begin{equation}
\tau_L = \frac{(u_{ct}^\top u_{ct})\, u_{ct}}{\|u_{ct}\|^2}.
\end{equation}
Since $u_{ct}^\top u_{ct} = \|u_{ct}\|^2$, we finally obtain $\tau_L = u_{ct}$.
\end{proof}
\subsection{Boundedness of matrix N}
\begin{claim} \label{Norm of N}
    For N in \eqref{N and M12}, such that $0\leq \Vert N\rVert^2 \leq \frac{5m_1^2l^2}{4m_{qm}^2}$.
\end{claim}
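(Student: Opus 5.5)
The plan is to bound the spectral norm of $N$ by its Frobenius norm, then bound the two column blocks of $N$ separately, using only the explicit kinematic matrices $T(\Phi)$ and $J_{v_1}(\Theta)$ and the fact that $\|\eta_1^b\| = l/2$.

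By \eqref{N and M12} and the block definitions of $\tilde M_{12}$, $M_{12}$ and $M_{13}$,
\begin{equation*}
N(\zeta) = -\frac{m_1}{m_{qm}}\begin{bmatrix} -\widehat{a}\,T(\Phi) & R_0 J_{v_1}(\Theta)\end{bmatrix}, \qquad a := R_0\eta_1^b .
\end{equation*}
The lower bound $0\le \|N\|^2$ is trivial. For the upper bound I will use $\|A\|_2^2 \le \|A\|_F^2$ together with the column-block additivity of the squared Frobenius norm. This gives
\begin{equation*}
\|N\|^2 \le \frac{m_1^2}{m_{qm}^2}\Bigl(\|\widehat{a}\,T\|_F^2 + \|R_0 J_{v_1}\|_F^2\Bigr).
\end{equation*}

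\textbf{Second block.} Since $R_0$ is orthogonal, $\|R_0J_{v_1}\|_F = \|J_{v_1}\|_F$. Summing the squares of the explicit entries of $J_{v_1}$ and using $\sin^2\theta_1+\cos^2\theta_1=1$, I expect
\begin{equation*}
\|J_{v_1}\|_F^2 = \frac{l^2}{4}\bigl(1+\sin^2\theta_2\bigr) \le \frac{l^2}{2}.
\end{equation*}

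\textbf{First block.} I will use the submultiplicative bound $\|\widehat{a}\,T\|_F \le \|\widehat{a}\|_2\,\|T\|_F$. The spectral norm of a skew matrix is $\|\widehat a\|_2 = \|a\|$. Because $R_0$ and $R_1^b$ are rotations, \eqref{body_mani} gives $\|a\| = \|\eta_1^b\| = l/2$. Summing the squared entries of $T(\Phi)$ row by row:
\begin{equation*}
\|T\|_F^2 = \bigl(1+\sin^2\phi_p\bigr) + \bigl(\cos^2\phi_r + \sin^2\phi_r\cos^2\phi_p\bigr) + \bigl(\sin^2\phi_r + \cos^2\phi_r\cos^2\phi_p\bigr) = 3 .
\end{equation*}
Hence $\|\widehat{a}\,T\|_F^2 \le 3l^2/4$.

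\textbf{Combining.} Adding the two block bounds gives $3l^2/4 + l^2/2 = 5l^2/4$, which yields exactly $\|N\|^2 \le \frac{5m_1^2l^2}{4m_{qm}^2}$.

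The only step needing care is the first block, since $\widehat a$ and $T$ are coupled. The main risk is choosing a lossy inequality there, for example $\|\widehat a\|_F\|T\|_F$, which would pick up an extra factor $\sqrt 2$. I will therefore use $\|\widehat{a}\|_2\,\|T\|_F$ to hit the stated constant. The bound is uniform in $\zeta$, which is what the gain choice for $\bar\gamma_l$ in the proof of the Theorem requires.
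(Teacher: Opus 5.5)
Your proposal is correct and follows the same route as the paper. Both arguments bound $\|N\|^2=\lambda_{\max}(N^\top N)$ by $\mathrm{trace}(N^\top N)$ (your squared Frobenius norm), split it into the $M_{12}$ and $M_{13}$ blocks, and bound these by $\tfrac34 m_1^2l^2$ and $\tfrac12 m_1^2l^2$; your steps $\|\widehat a\,T\|_F\le\|\widehat a\|_2\|T\|_F$ with $\|T\|_F^2=3$ and $\|J_{v_1}\|_F^2=\tfrac{l^2}{4}(1+\sin^2\theta_2)$ supply the justification that the paper only asserts (its bound $-3\le\mathrm{trace}(T^\top(\widehat{R_0R_1^be_3})^2T)\le 3$ and the ``similarly'' for the second block).
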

\begin{proof}
    From an identity of $L_2$ norm,
    \begin{equation} \label{norm N2}
        \|N\|^2 = {\lambda_{\max}(N^T N)} \leq trace(N^\top N)
    \end{equation}
 here, 
 $N^\top N = \frac{1}{m_{qm}^2}\begin{bmatrix}
M_{12}^\top M_{12} & M_{12}^\top M_{13}\\
M_{13}^\top M_{12} & M_{13}^\top M_{13}
 \end{bmatrix}, $
 This gives
 \begin{align} \label{trace NT N}
     trace(N^\top N ) = \frac{1}{m_{qm}^2}[trace(M^\top_{12}M_{12}) +trace(M^\top_{13}M_{13})]
 \end{align}
 calculating terms of \eqref{trace NT N},
 \begin{align*}
     M_{12}^\top M_{12} &= m_1^2 T^\top \widehat{R_0\eta_{1}^{b}}^\top \widehat{R_0\eta_{1}^{b}} T \\
     &= - \frac{m_1^2l^2}{4}T^\top(\widehat{R_0R_{1}^{b}e_3})^2 T\\
     trace(M_{12}^\top M_{12}) &= - \frac{m_1^2l^2}{4} trace(T^\top(\widehat{R_0R_{1}^{b}e_3})^2 T)
 \end{align*}
Note that $-3 \leq trace(T^\top\widehat{({R_0R_{1}^{b}e_3}})^2 T) \leq 3 $; this implies 
 \[- \frac{3}{4}m_1^2l^2\leq trace(M_{12}^\top M_{12}) \leq \frac{3}{4}m_1^2l^2\]
Similarly, the second term will be bounded as 
 \[-\frac{m_1^2l^2}{2}\leq trace(M_{13}^\top M_{13}) \leq \frac{m_1^2l^2}{2} \]
 Therefore, the boundedness on \eqref{trace NT N} takes the form
 \[-\frac{5m_1^2l^2}{4m_{qm}^2} \leq trace(N^\top N) \leq \frac{5m_1^2l^2}{4m_{qm}^2}\]
 Finally, from \eqref{norm N2} and $\Vert N\rVert^2 \geq 0$, this renders
 \[0\leq \Vert N\rVert^2 \leq trace(N^\top N) \leq \frac{5m_1^2l^2}{4m_{qm}^2}.\]
\end{proof}
\bibliographystyle{IEEEtran}
\bibliography{IJc}
\end{document}